\algrenewcommand\algorithmicrequire{\textbf{Input:}}
\algrenewcommand\algorithmicensure{\textbf{Output:}}
\algnewcommand{\algorithmicand}{\textbf{ and }}
\algnewcommand{\algorithmicor}{\textbf{ or }}
\algnewcommand{\OR}{\algorithmicor}
\algnewcommand{\AND}{\algorithmicand}
\theoremstyle{definition}
\newtheorem{theorem}{Theorem}[section]
\newtheorem{corollary}[theorem]{Corollary}
\newtheorem{definition}[theorem]{Definition}
\newtheorem{remark}[theorem]{Remark}
\newtheorem{lemma}[theorem]{Lemma}
\newtheorem{example}[theorem]{Example}
\newtheorem{assumption}[theorem]{Assumption}
\DeclareMathOperator*{\argmin}{arg\,min}
\newcommand{\precqq}{\begin{matrix}
		\hspace{1mm}\prec{ }\\[-3mm]
		\hspace{1mm}= { }
\end{matrix}}
\newcommand{\precnqq}{\not\hspace{-2mm}\precqq }
\newcommand{\preceqq}{\!\begin{array}{c} \prec\\[-0.3cm]=\end{array}\!\!\vspace*{-0.3cm}}
\tikzset{
  candidat/.style={rectangle, inner sep=0pt, minimum size=0.1cm, draw=gray, fill=gray},
  nds/.style={circle, inner sep=0pt, minimum size=0.12cm, draw=black, fill=black},
  nodestyle/.style={circle, inner sep=0pt, minimum size=0.1cm, draw=black!30, fill=black!30},
  shelterstyle/.style={circle, inner sep=2pt, minimum size=0.22cm, draw=black, fill=gray!30, font=\tiny},
  evacstyle/.style={circle, inner sep=2pt, minimum size=0.22cm, draw=black, fill=gray!30, font=\tiny},
  ndns/.style={rectangle, inner sep=0pt, minimum size=0.12cm, draw=black, fill=black},
  test/.style={circle, inner sep=0pt, minimum size=0.12cm, draw=black, fill=black},
  c1/.style={black!30, line width=0.5mm},
  c2/.style={black!30, dashed, line width=0.4mm},
  c3/.style={black!30, dotted, line width=0.4mm},
}
\pgfplotsset{compat=1.8}
\begin{document}

\newcommand{\cmnt}[1]{\noindent {\bf[ #1 ]}}
\newcommand{\hide}[1]{}

\title{Shortest Paths with Ordinal Weights}
\subtitle{}

\author[(1)]{Luca E. Sch\"afer\footnote{Corresponding author\\Email address: luca.schaefer@mathematik.uni-kl.de (Luca Elias Schäfer)
}}
\author[(1)]{Tobias Dietz}
\author[(1)]{Nicolas Fr\"ohlich}
\author[(1)]{Stefan Ruzika}
\author[(2)]{Jos\'{e} Rui Figueira}

\affil[(1)]{Department of Mathematics, Technische Universit\"at Kaiserslautern, Germany}
\affil[(2)]{CEG-IST, Instituto Superior T\'{e}cnico, Universidade de Lisboa, Portugal}

\date{}
\maketitle

	\begin{abstract}
		\subsection*{Abstract}
			We investigate the single-source-single-destination "shortest" paths problem in acyclic graphs with ordinal weighted arc costs. We define the concepts of ordinal dominance and efficiency for paths and their associated ordinal levels, respectively. Further, we show that the number of ordinally non-dominated paths vectors from the source node to every other node in the graph is polynomially bounded and we propose a polynomial time labeling algorithm for solving the problem of finding the set of ordinally non-dominated paths vectors from source to sink.
			\newline
			
			\noindent\emph{Keywords: Networks, Ordinal scale, Ordinal shortest path problem, Multicriteria optimization, Non-dominance}
	\end{abstract}

\section{Introduction}
Shortest path problems and applications have been intensively studied in the literature, (cf. \cite{gallo}, \cite{cher}, \cite{drey}, \cite{sprev}). However, in practical applications, one may have qualitative or ordinal information instead of numerical data available, e.g., in the case of demonstration marches the police staff may be able to assess different paths or path segments as "secure", "insecure", or "neutral", respectively. Thus, instead of assessing the value of a path as the sum of the values of its corresponding arcs, we evaluate a path as a vector containing the ordinal arc weights associated with this path. Hence, we compare different paths by their corresponding vectors of ordinal arc weights.
In this work, we aim to investigate the case of having ordinal information on the set of arcs and how to find "good" paths with respect to a given ordinal scale. 

Literature on optimization problems with an ordinal scale is rather limited.
In \cite{happy}, the authors suggest a procedure for finding different routes which are "emotionally pleasant" based on data from a crowd-sourcing platform. This data is then translated into quantitative measures of location perceptions.
In \cite{spanjaard}, a general preference-based framework for combinatorial problems is presented, where an arbitrary order relation is assumed to be given. The authors showed that the problem of finding the set of preferred paths with respect to the given order relation is in general intractable. In their work, the authors study different characteristics of preference relations. For this purpose, they introduce the "independence axiom" to investigate on which class of preference structures their proposed algorithms yield efficient solutions. For combinatorial problems with a non-independent preference relation, they propose approximation algorithms, which yield supersets of the set of preferred paths. If weak independence is satisfied, a subset of such paths can be computed. In contrast to their work, we investigate general acyclic graphs with a fixed (non-independent) preference relation. The algorithm developed in this paper computes the entire set of ordinally non-dominated paths vectors in polynomial time.
In \cite{minpaths}, the authors analyze acyclic graphs whose arc set is partially ordered by a preference relation. Paths are evaluated componentwise while exploiting Bellman's principle of optimality, which is not fulfilled in our model.
The term ordinal efficiency is also used in the context of stochastic dominance, (cf. \cite{ordinaleff1}, \cite{ordinaleff2}, \cite{ordinaleff3}), which is not related to our work.
In \cite{bottle}, a generalization of combinatorial bottleneck problems using a partially ordered scale is studied, while in our model a specific preorder on the set of \(s\)-\(t\)-paths is used.
Motivated by the example of civil security, we introduce a model with a specific order relation which can be used for practical computations.

Thus, the remainder of this paper is outlined as follows. In Section \ref{sec:intro}, we introduce graph-theoretical concepts and basics of binary relations needed throughout this work. In Section \ref{sec:problem}, we present our concept of ordinal efficiency and dominance. A labeling algorithm with polynomial runtime is discussed in Section \ref{sec:algorithm}, while in Section \ref{sec:practicalImprovements} we propose practical improvements to reduce the running time in practical applications. In Section \ref{sec:computationalResults}, we show computational results of the algorithm on randomly created acyclic graphs and on acyclic grid graphs. In Section \ref{sec:conclusion}, we conclude the paper and propose some directions for further research.

\section{Preliminaries on Ordinal Weighted Graphs}\label{sec:intro}
Let $G = (V,E)$ denote a directed, connected and acyclic graph with vertex set $V$ and arc set $E \subseteq V \times V$. 
We assume the graph \(G\) to be without parallel arcs. Further, we suppose a set of qualitative complete ordered levels \(\mathcal{C} \coloneqq \{1,\ldots,K\}\) with \(1 \prec \cdots \prec K\) to be given, where \(x \prec y\) indicates that \(x\) is strictly preferred over \(y\).
We assign a qualitative level to every arc of \(G\) by introducing an ordinal weight function over the set of arcs, i.e., \(f \colon E \rightarrow \mathcal{C}\).
Furthermore, we assume a source node \(s\) and a sink node \(t\) to be given. A directed path \(P\)  from \(s\) to \(t\) is defined by a sequence of directed arcs connecting a series of nodes starting in \(s\) and ending in \(t\), i.e., \(P=(s,e_1,v_1,\ldots,v_{n-1},e_n,t)\). With \(\mathcal{P}\), we denote the set of all \(s\)-\(t\)-paths.

For the purpose of ordinal weighted graphs, we recall some definitions of binary relations, (cf. \cite{EHR}).
A binary relation on \(\mathcal{C}\) is a subset \(\mathcal{R}\) on \(\mathcal{C} \times \mathcal{C}\).
\begin{definition}[Properties of a binary relation]\mbox{}
	
	A binary relation \(\mathcal{R}\) on \(\mathcal{C}\) is called
	\begin{itemize}
		\item[1)] reflexive, if \((x,x) \in \mathcal{R}\) for all \(x \in \mathcal{C}\)
		\item[2)] transitive, if \((x,y) \in \mathcal{R}\) and \((y,z) \in \mathcal{R} \Rightarrow (x,z) \in \mathcal{R}\) for all \(x,y,z \in \mathcal{C}\)
		\item[3)] antisymmetric, if \((x,y) \in \mathcal{R}\) and \((y,x) \in \mathcal{R} \Rightarrow x = y\) for all \(x,y \in \mathcal{C}\)
	\end{itemize}
\end{definition}

\begin{definition}[Orders]\mbox{}
	
	A binary relation \(\mathcal{R}\) on \(\mathcal{C}\) is called a 
	\begin{itemize}
		\item[1)] preorder, if it is reflexive and transitive
		\item[2)] partial order, if it is reflexive, transitive and antisymmetric.
	\end{itemize}
	
\end{definition}
Given a preorder \(\preceq\) on \(\mathcal{C}\), we define two additional relations as follows:
\begin{align*}
	& x \prec y :\Leftrightarrow x \preceq y \text{ and } y \not\preceq x \text{ (asymmetric part of} \preceq)\\
	& x \sim y :\Leftrightarrow x \preceq y \text{ and } y \preceq x \text{ (symmetric part of} \preceq)
\end{align*}


\section{Problem formulation}\label{sec:problem}
In the sequel, we consider the single-source-single-destination "shortest" path problem, i.e., we aim to find the set of paths from a source node \(s\) to a target node \(t\) that "minimizes" the vectors of ordinal levels associated with the ordinal arc weights of these paths. Note, that these paths will be simple due to the acyclic nature of the graphs under consideration. A path is called simple if it contains no repeated vertices.
In order two distinguish paths with respect to their associated ordinal levels, we define the following concepts.
\begin{definition}[Ordinal path vector]\mbox{}
	
	Let \(P \in \mathcal{P}\) with \(P=(s,e_1,v_1,\ldots,v_{n-1},e_n,t)\). Then, the associated ordinal path vector is given by \(f(P) \coloneqq (f(e_1),\ldots,f(e_n)\).
\end{definition}

To be able to compare vectors of ordinal levels, we define the following orders on \(\mathcal{C}^n\), \(n \in \mathbb{N}\).
\begin{definition}[Comparison of two ordinal path vectors]\mbox{}
	
	Let \(x,y \in \mathcal{C}^n\) with \(n \in \mathbb{N}\). Then:
	\begin{align*}
	& x \prec y :\Leftrightarrow x^i \prec y^i \ \forall i = 1,\ldots,n\\
	& x \preceq y :\Leftrightarrow x^i \preceq y^i \ \forall i = 1,\ldots,n \text{ and } x^j \prec y^j \text{ for at least one }j\\
	& x \preceqq y :\Leftrightarrow x^i \preceq y^i \ \forall i = 1,\ldots,n
	\end{align*}
\end{definition}

In the following, we relate paths and their corresponding ordinal path vectors to each other. 

Further, we define the length of an ordinal path vector and its corresponding sorted version.
\begin{definition}[Notation for ordinal paths vectors]\label{def:notation}\mbox{}
	
	Let \(P \in \mathcal{P}\) with \(P=(s,e_1,v_1,\ldots,v_{n-1},e_n,t)\), let \(m < n\) with \(m,n \in \mathbb{N}\), and let \(\pi\) denote a permutation of the corresponding index set. Then:
	\begin{itemize}
		\item[1)] \(\text{len}(f(P))\) denotes the length of the ordinal path vector \(f(P)\)
		\item[2)] \(\text{sort}(f(P))\coloneqq(f(e_{\pi(1)}),\ldots,f(e_{\pi(n)}))\) with \(f(e_{\pi(1)})\preceq \dots \preceq f(e_{\pi(n)})\) denotes the vector \(f(P)\) with its associated ordinal levels ordered in a non-decreasing manner
		\item[3)] \(\text{sort}^{forw}_m(f(P)) \coloneqq(f(e_{\pi(1)}),\ldots,f(e_{\pi(m)}))\) with \(f(e_{\pi(1)}) \preceq \dots \preceq f(e_{\pi(m)})\) contains the first \(m\) entries of \(\text{sort}(f(P))\)
		\item[4)] \(\text{sort}^{backw}_m(f(P))\coloneqq(f(e_{\pi(n-m+1)}),\ldots,f(e_{\pi(n)}))\) with \(f(e_{\pi(n-m+1)}) \preceq \dots \preceq f(e_{\pi(n)})\) contains the last \(m\) entries of \(\text{sort}(f(P))\)
	\end{itemize}
\end{definition}

Note, that \(\text{sort}^{forw}_m(f(P))\) and \(\text{sort}^{backw}_m(f(P))\) denotes the \(m\) best and worst entries of \(f(P)\), respectively.

\begin{example}\mbox{}
	
	Let \(P \in \mathcal{P}\) with \(P = (s,e_1,v_1,e_2,v_2,e_3,t)\). Further, let \(f(e_1) = 2\), \(f(e_2) = 3\), and \(f(e_3) = 1\). Then, it holds:
	\begin{itemize}
		\item[1)] \(\text{len}(f(P)) = 3\)
		\item[2)] \(\text{sort}(f(P)) = (1,2,3)\)
		\item[3)] \(\text{sort}^{forw}_2(f(P)) = (1,2)\)
		\item[4)] \(\text{sort}^{backw}_2(f(P)) = (2,3)\)
	\end{itemize}
\end{example}

Using the above mentioned definitions, we are now able to state our concept of dominance and efficiency with respect to ordinally weighted paths.

\begin{remark}
	Note that if \(m=n\) in Definition \ref{def:notation} it holds that
	
	\(\text{sort}(f(P))=\text{sort}^{backw}_n(f(P))=\text{sort}^{forw}_m(f(P))\).
\end{remark}

\begin{definition}[Ordinal dominance]\label{def:dominance}\mbox{}
	
	Let \(P_1,\ P_2 \in \mathcal{P}\) and let \(m,\ n \in \mathbb{N}\) denote the length of \(f(P_1)\) and \(f(P_2)\), respectively. Then, $P_1$ ordinally dominates $P_2$, i.e., 
	\begin{equation*}
		P_1 \precqq P_2 :\Leftrightarrow 
		\begin{cases}
			\text{sort}(f(P_1)) \precqq \text{ sort}(f(P_2)), \text{ if } m = n\\
			\text{sort}^{backw}_n(f(P_1))  \precqq \text{ sort}(f(P_2)), \text{  if } m > n\\
			\text{sort}(f(P_1)) \precqq \text{ sort}^{forw}_m(f(P_2)), \text{ if } m < n
		\end{cases}
	\end{equation*}
\end{definition}

\begin{definition}[Ordinally efficient path]\mbox{}
	
	A path \(P \in \mathcal{P}\) is called ordinally efficient, if there is no other  \(P' \in \mathcal{P}\) such that \(P' \precqq P\) and \(\text{sort}(f(P')) \neq \text{ sort}(f(P))\). The set of ordinally efficient \(s\)-\(t\)-paths is denoted by \(\mathcal{P}_E\).
\end{definition}

\begin{definition}[Ordinally non-dominated path vector]\mbox{}
	
	A sorted ordinal path vector \(\text{sort}(f(P)),\ P\in \mathcal{P}\), is called ordinally non-dominated, if \(P\) is ordinally efficient. The set of ordinally non-dominated paths vectors from \(s\) to \(t\) is denoted by \(\mathcal{P}_N\).
\end{definition}

\begin{definition}[Minimal complete set]\mbox{}
	
	Let \(P_1, P_2 \in \mathcal{P}\). We say that \(P_1\) is equivalent to \(P_2\) if and only if \(\text{sort}(f(P_1)) = \text{sort}(f(P_2))\). A minimal complete set \(\mathcal{P}_E^* \subseteq \mathcal{P}_E\) is a set of ordinally efficient \(s\)-\(t\)-paths such that all \(P \in \mathcal{P}\backslash\mathcal{P}_E\) are either dominated or equivalent to at least one \(P \in \mathcal{P}_E\).
\end{definition}

The previous definitions are motivated by the following real world application in the context of civil security.

\begin{example}
	Consider an evacuation scenario where one aims to determine the best possible evacuation routes. Besides the length of those routes, also the quality or nature of the path segments have to be taken into consideration. Undoubtedly, broad asphalted streets are better than narrow stairways in case of an evacuation. This is the reason why a decision maker may be able to rate streets by ordinal criteria, e.g., "good", "moderate", and "straitened", where "good" \( \prec\) "moderate" \( \prec\)  "straitened". Moreover, in an evacuation scenario, the resulting network can be assumed to be acyclic, since people should always move from the incident away towards some assembly point. An example is given in Figure \ref{fig:example}. Arcs labeled by "good", "moderate", and "straitened" are drawn solid, dashed, and dotted, respectively. The two ordinally non-dominated paths from \(S\) to \(T\) are highlighted.

	\begin{figure}[h!]
		\centering
		\begin{tikzpicture}[scale=0.8, ->]
		
		\node[evacstyle] (0) at (3,3) {$S$};
		\node[nodestyle] (1) at (1.2,2.5) {};
		\node[nodestyle] (2) at (1,1) {};
		\node[nodestyle] (3) at (0,2) {};
		\node[nodestyle] (4) at (0.5,4.2) {};
		\node[nodestyle] (5) at (1.3,4.8) {};
		\node[nodestyle] (6) at (0.8,5.6) {};
		\node[nodestyle] (7) at (2.8,4.7) {};
		\node[nodestyle] (8) at (3.3,5.5) {};
		\node[nodestyle] (9) at (4.5,4.5) {};
		\node[nodestyle] (10) at (5.3,5.6) {};
		\node[nodestyle] (11) at (4.6,6.2) {};
		\node[nodestyle] (12) at (7,5.5) {};
		\node[nodestyle] (14) at (7.7,4.5) {};
		\node[nodestyle] (15) at (6.2,4.1) {};
		\node[nodestyle] (16) at (5,3.2) {};
		\node[nodestyle] (17) at (6.9,2.5) {};
		\node[nodestyle] (18) at (7.2,1.7) {};
		\node[nodestyle] (19) at (5.6,1.2) {};
		\node[nodestyle] (20) at (5.2,2.2) {};
		\node[nodestyle] (21) at (5.7,0.3) {};
		\node[nodestyle] (22) at (4.2,1.1) {};
		\node[nodestyle] (23) at (4.3,2.4) {};
		\node[nodestyle] (24) at (3.3,0.1) {};
		\node[nodestyle] (25) at (2.9,1.5) {};
		\node[shelterstyle] (26) at (8,3.4) {$T$};
		\path (0) edge[c2] (1);
		\path (0) edge[c1, black] (7);
		\path (0) edge[c1,black] (9);
		\path (0) edge[c2] (23);
		\path (0) edge[c1] (25);
		\path (1) edge[c2] (2);
		\path (1) edge[c1,black] (3);
		\path (1) edge[c2] (4);
		\path (2) edge[c1,black] (24);
		\path (3) edge[c1,black] (2);
		\path (3) edge[c1] (4);
		\path (4) edge[c1] (5);
		\path (4) edge[c3] (6);
		\path (5) edge[c2] (6);
		\path (5) edge[c2] (8);
		\path (6) edge[c3] (8);
		\path (6) edge[c2] (11);
		\path (7) edge[c2,black] (1);
		\path (7) edge[c3] (5);
		\path (7) edge[c3] (8);
		\path (8) edge[c1] (11);
		\path (8) edge[c2] (10);
		\path (9) edge[c2] (8);
		\path (9) edge[c3] (10);
		\path (9) edge[c2,black] (15);
		\path (9) edge[c2] (16);
		\path (10) edge[c2] (12);
		\path (10) edge[c1] (14);
		\path (11) edge[c1] (12);
		\path (12) edge[c3] (14);
		\path (14) edge[c3,black] (26);
		\path (15) edge[c1,black] (14);
		\path (16) edge[c3] (15);
		\path (16) edge[c3] (20);
		\path (16) edge[c3] (17);
		\path (17) edge[c2] (26);
		\path (18) edge[c2,black] (26);
		\path (18) edge[c1] (17);
		\path (19) edge[c1,black] (18);
		\path (20) edge[c2] (18);
		\path (20) edge[c1] (19);
		\path (21) edge[c2] (18);
		\path (21) edge[c1,black] (19);
		\path (22) edge[c2] (19);
		\path (22) edge[c1,black] (21);
		\path (23) edge[c2] (20);
		\path (23) edge[c3] (22);
		\path (23) edge[c1] (25);
		\path (24) edge[c3] (21);
		\path (24) edge[c2,black] (22);
		\path (25) edge[c2] (24);
		\path (25) edge[c3] (2);
		\end{tikzpicture}
		\caption{Application of "shortest paths with ordinal weights" in an evacuation scenario}
		\label{fig:example}
	\end{figure}
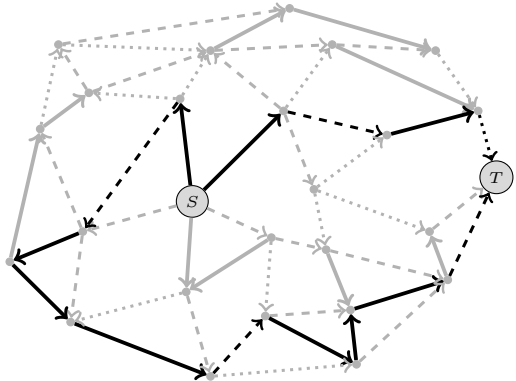
\end{example}

\begin{lemma}[Preorder]\mbox{}
	
	The relation \(\precqq\) defined on the set of \(s\)-\(t\)-paths \(\mathcal{P}\) is a preorder.
\end{lemma}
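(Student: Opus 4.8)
The plan is to verify the two defining properties of a preorder — reflexivity and transitivity — directly from Definition~\ref{def:dominance}. The only structural fact used is that the relation $\preceq$ on $\mathcal{C}=\{1,\dots,K\}$ is a total order, hence reflexive and transitive, which in turn makes the componentwise relation $\precqq$ on $\mathcal{C}^{k}$ reflexive and transitive as well. Reflexivity of $\precqq$ on $\mathcal{P}$ is then immediate: for $P\in\mathcal{P}$ we have $\text{len}(f(P))=\text{len}(f(P))$, so the first clause of Definition~\ref{def:dominance} applies, and $\text{sort}(f(P))\precqq\text{sort}(f(P))$ holds entrywise since $x\preceq x$ for all $x\in\mathcal{C}$.

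For transitivity, let $P_{1}\precqq P_{2}$ and $P_{2}\precqq P_{3}$, and set $a:=\text{sort}(f(P_{1}))$, $b:=\text{sort}(f(P_{2}))$, $c:=\text{sort}(f(P_{3}))$, of lengths $m,n,p$, each non-decreasing with respect to $\preceq$. The key preliminary step is to rewrite each clause of Definition~\ref{def:dominance} as a family of scalar inequalities between shifted entries: $P_{1}\precqq P_{2}$ says $a_{i}\preceq b_{i}$ for $1\le i\le m$ if $m\le n$, and $a_{m-n+i}\preceq b_{i}$ for $1\le i\le n$ if $m\ge n$; the same form holds for $P_{2}\precqq P_{3}$ with $(b,c,n,p)$, and the target statement $P_{1}\precqq P_{3}$ has this form with $(c,p)$ in place of $(b,n)$. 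One then splits into the at most six cases determined by the relative order of $m,n,p$. When the shift patterns are compatible — e.g.\ for $m\le n\le p$, where $a_{i}\preceq b_{i}\preceq c_{i}$ holds directly for $i\le m$ — the conclusion follows by chaining $a_{\bullet}\preceq b_{\bullet}\preceq c_{\bullet}$ with the appropriate index shifts, transitivity of $\preceq$ being invoked at each step and monotonicity of $b$ absorbing any mismatch (e.g.\ replacing $b_{i}$ by $b_{n-p+i}$). The only cases needing extra care are those in which the middle path $P_{2}$ is strictly the shortest of the three: there both hypotheses involve all of $b$, yet the conclusion must also control the smallest entries of $a$, which lie outside the range covered by the hypothesis on $P_{1},P_{2}$. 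For those entries one pads the chain using the sortedness of $a$ and of $c$, e.g.\ $a_{j}\preceq a_{m-n+1}\preceq b_{1}\preceq c_{1}\preceq c_{j}$, with the obvious variant in the remaining sub-case.

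The substantive difficulty is purely the index bookkeeping in the transitivity argument: one must keep track, in each ordering of the lengths, of which entries of $a$, $b$, $c$ are matched, and notice that entries left uncovered by the hypotheses have to be handled through the monotonicity of the sorted vectors rather than directly. Once the uniform reformulation of $\precqq$ in terms of shifted scalar inequalities is in place, each of the cases collapses to a short routine verification, and the claim follows.
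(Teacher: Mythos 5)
Your proof is correct and follows essentially the same route as the paper's: reflexivity is immediate, and transitivity is shown by a case analysis over the relative order of the three path lengths, rewriting $\precqq$ as shifted componentwise inequalities and chaining them using transitivity of $\preceq$ and the sortedness of the vectors. In fact you are more complete than the paper, which only writes out two monotone length-orderings and dismisses the rest as ``analogous'': you correctly single out the delicate cases where the middle path is strictly shortest and supply the extra padding step (e.g.\ $a_j \preceq a_{m-n+1} \preceq b_1 \preceq c_1 \preceq c_j$) that those cases genuinely require.
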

\begin{proof}
	Let \(P \in\mathcal{P}\) be an arbitrary \(s\)-\(t\)-path. Then, \(\text{sort}(f(P)) \precqq \text{ sort}(f(P))\) and, therefore, it is reflexive.
	To show that \(\precqq\) is transitive, let \(P, Q, R \in \mathcal{P}\) be \(s\)-\(t\)-paths with \(\text{len}(f(P)) = l\), \(\text{len}(f(Q)) = m\), \(\text{len}(f(R)) = n\) and \(P\precqq Q\) and \(Q\precqq R\). We have to show that \(P\precqq R\) and, therefore, we distinguish six cases:
	\begin{itemize}
		\item[1)] \(n \geq m \geq l\):
		We know that \(\text{sort}(f(P)) \precqq \text{ sort}^{forw}_l(f(Q))\) and
		\newline 
		\(\text{sort}(f(Q)) \precqq\text{ sort}^{forw}_m(f(R))\) due to \(P \precqq Q\) and \(Q \precqq R\).
		Therefore, it holds 
		\newline
		\((\text{sort}(f(Q)))_j \preceq (\text{sort}^{forw}_m(f(R)))_j\), \(j = 1,\ldots,m\) and consequently 
		\newline
		\(\text{sort}(^{forw}_l(f(Q)))_j \preceq (\text{sort}^{forw}_l(f(R))_j\), \(j = 1,\ldots,l\) due to \(l\leq m\).
		It follows that \(\text{sort}(f(P))= \text{sort}^{forw}_l(f(P)) \precqq \text{ sort}^{forw}_l(f(Q)) \precqq \text{ sort}^{forw}_l(f(R))\) and thus, \(P \precqq R\).
		
		\item[2)] \(l \geq m \geq n\):
		We know that \(\text{sort}^{backw}_m(f(P)) \precqq \text{ sort}(f(Q))\) and
		\newline
		\(\text{sort}^{backw}_n(f(Q)) \precqq \text{ sort}(f(R))\) due to \(P \precqq Q\) and \(Q \precqq R\).
		Therefore, it holds 
		\newline
		\((\text{sort}^{backw}_m(f(P)))_j \preceq (\text{sort}(f(Q)))_j\), \(j = 1,\ldots,m\) and consequently 
		\newline
		\((\text{sort}^{backw}_n(f(P)))_j \preceq (\text{sort}^{backw}_n(f(Q))_j\), \(j = 1,\ldots,n\) due to \(n<m\).
		It follows that \(\text{sort}^{backw}_n(f(P)) \precqq \text{ sort}^{backw}_n(f(Q)) \precqq \text{ sort}^{backw}_n(f(R)) = \text{ sort}(f(R))\) and thus, \(P \precqq R\).
		\item[3)] The remaining cases can be shown analogously.
	\end{itemize}
	This concludes the proof of transitivity and thus the relation \(\precqq\) defines a preorder on \(\mathcal{P}\).
	
\end{proof}

Note, that \(\precqq\) defined on the set of simple \(s\)-\(t\)-paths is not antisymmetric and thus it does not define a partial order.
To show that \(\precqq\) is not antisymmetric, consider the graph \(G=(V,E)\) with \(V=\{s,a,t\}\) and \(E=\{(s,a),(a,t),(s,t)\}\) as depicted in Figure~\ref{antisymmetric}.
Let \(P = (s,t)\) and \(Q = (s,a,t)\) and let \(j \in \mathcal{C}\) be an arbitrary ordinal level.  Note, that \(P \precqq Q\) and \(Q \precqq P\). However, \(\text{sort}(f(P)) \neq \text{sort}(f(Q))\).
\begin{figure}[h!]
	\centering
	\begin{tikzpicture}[scale=0.5]
	\node[draw, circle,inner sep=2pt] (s) at (0,10) {$s$};
	\node[draw, circle,inner sep=2pt] (a) at (5,8) {$a$};
	\node[draw, circle,inner sep=2pt] (t) at (10,10) {$t$};
	
	\draw[->,thick] (s)--node[above]{\(j\)}(t);
	\draw [->,thick] (s)--node[below]{\(j\)}(a);
	\draw [->,thick] (a)--node[below]{\(j\)}(t);
	\end{tikzpicture}
	\caption{Counterexample - antisymmetric}
	\label{antisymmetric}
\end{figure}
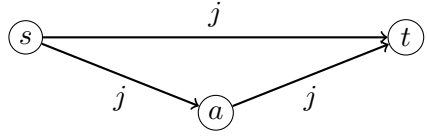

Throughout this paper we make the following assumptions.
\newline

\begin{assumption}\mbox{}
	\begin{itemize}
		\item[1)] The number of ordinal levels is fixed, i.e., \(K\) is fixed.
		\item[2)] We are only interested in the set of ordinally non-dominated path vectors from \(s\) to \(t\) and in a minimal complete set \(\mathcal{P}_E^*\).
	\end{itemize}
\end{assumption}

We state the problem of finding the set of \(s\)-\(t\)-paths that minimizes the ordinal path vectors associated with these \(s\)-\(t\)-paths, i.e., 
\begin{equation*}\tag{\(\mathsf{OSP}\)}
	\begin{array}{lll}
	\min_{\precqq} &\text{sort}(f(P))& \\
	\text{s.t.} & P \in \mathcal{P}
	\end{array}	
\end{equation*}
We call this problem the ordinal shortest paths problem  \((\mathsf{OSP})\).


\begin{theorem}[Number of ordinally efficient paths]\mbox{}
	
	The number of ordinally efficient paths from \(s\) to \(t\) may be exponential in the number of nodes.
\end{theorem}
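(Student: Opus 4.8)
The plan is to exhibit, for every $k\in\mathbb{N}$, a directed acyclic graph $G_k$ on $\Theta(k)$ nodes that carries $2^{k}$ distinct $s$-$t$-paths, all of which are ordinally efficient. The construction is a chain of $k$ ``diamonds''. Take nodes $s=v_0,v_1,\dots,v_k=t$ together with auxiliary nodes $a_1,b_1,\dots,a_k,b_k$, and for each $i\in\{1,\dots,k\}$ the four arcs $(v_{i-1},a_i)$, $(a_i,v_i)$, $(v_{i-1},b_i)$, $(b_i,v_i)$. This graph is directed, connected, acyclic (a topological order is $v_0,a_1,b_1,v_1,a_2,b_2,v_2,\dots$), and has no parallel arcs; it has $n\coloneqq 3k+1$ nodes. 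Every $s$-$t$-path traverses the diamonds in order and, in the $i$-th diamond, uses either the route through $a_i$ or the route through $b_i$, independently of the choices made in the other diamonds; hence $|\mathcal{P}|=2^{k}$, and every $P\in\mathcal{P}$ has $\text{len}(f(P))=2k$.

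Next I would fix the ordinal weights so that all $2^{k}$ paths share the same sorted ordinal path vector. The simplest choice is to let $f$ be constant, say $f(e)=1$ for every arc $e$; then $\text{sort}(f(P))=(1,\dots,1)\in\mathcal{C}^{2k}$ for every $P\in\mathcal{P}$. (If one prefers that several levels be used, set $f(v_{i-1},a_i)=f(b_i,v_i)=1$ and $f(a_i,v_i)=f(v_{i-1},b_i)=2$; then each path collects exactly $k$ ones and $k$ twos, so again $\text{sort}(f(P))$ equals the single vector $(1,\dots,1,2,\dots,2)$ for all $P$.) With this weighting, efficiency of every path is immediate from the definition: a path $P$ fails to be ordinally efficient only if some $P'\in\mathcal{P}$ satisfies $P'\precqq P$ \emph{and} $\text{sort}(f(P'))\neq\text{sort}(f(P))$; but in $G_k$ no two paths have different sorted ordinal path vectors, so this never occurs. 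Hence $\mathcal{P}_E=\mathcal{P}$ and $|\mathcal{P}_E|=2^{k}=2^{(n-1)/3}$, which is exponential in $n$, proving the claim.

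I would close with a remark explaining why the construction essentially has to look like this, which also reconciles the statement with the polynomial bound announced in the abstract: since the number $K$ of ordinal levels is fixed, the number of distinct sorted vectors of length at most $|E|$ is only polynomial in $|E|$, hence in $n$, so $|\mathcal{P}_N|$ is necessarily polynomial; the exponential blow-up must therefore come entirely from many pairwise \emph{equivalent} paths collapsing onto a single non-dominated vector, exactly as above. Accordingly there is no genuine obstacle in the proof: the only subtlety is the observation that ordinal efficiency of a path depends only on its sorted ordinal path vector, so that a family of mutually equivalent paths whose common vector is (vacuously) non-dominated already yields the desired exponential lower bound; aiming instead for exponentially many pairwise non-equivalent efficient paths would be impossible under the standing assumption that $K$ is fixed.
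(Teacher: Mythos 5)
Your proposal is correct and follows essentially the same route as the paper: a chain of $k$ diamonds with all arcs carrying one ordinal level, so that all $2^{k}=2^{(n-1)/3}$ $s$-$t$-paths share a single sorted ordinal path vector and are therefore vacuously ordinally efficient. The paper's instance (Figure~\ref{expeff}) is exactly this construction up to node labeling, and your closing remark about the exponential count coming only from mutually equivalent paths is a correct observation consistent with Theorem~\ref{thm:sizelist}.
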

\begin{proof}
We construct an instance, where \(|\mathcal{P}_E|\) is exponential in the number of nodes. Therefore, let \(G=(V,E)\) denote a directed, connected, and acyclic graph with \(V=\{s=v_1,\ldots,v_n=t\}\) and \(E=\{(v_i,v_{i+1}), i = 1,4,7,\ldots,n-3\}\cup\{(v_i,v_{i+2}), i = 1,4,7,\ldots,n-3\}\cup\{(v_i,v_{i+2}), i = 2,5,8,\ldots,n-2\}\cup\{(v_i,v_{i+1}), i = 3,6,9,\ldots,n-1\}\), where \(n-1\) is divisible by \(3\), i.e., \(n-1 \mod 3 = 0\). Further, let \(j \in \mathcal{C}\) denote an arbitrary ordinal level in \(\mathcal{C}\) and let \(f(e) = j\) for all \(e \in E\), see Figure~\ref{expeff}. There are \(\frac{4n-4}{3}\) arcs in the graph, if we construct the instance as described. 
	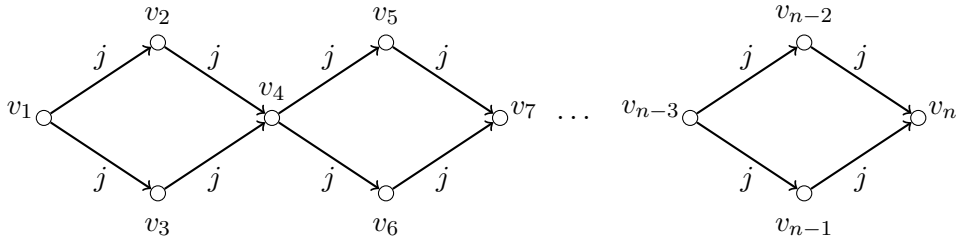
\begin{figure}[h!]
		\centering
		\begin{tikzpicture}[scale=0.5]
		\node[draw, circle,inner sep=2pt,label={[left]\(v_1\)}] (s) at (0,10) {};
		\node[draw, circle,inner sep=2pt,label={[above]\(v_2\)}] (a) at (3,12) {};
		\node[draw, circle,inner sep=2pt,label={[below,yshift=-0.3cm]\(v_3\)}] (b) at (3,8) {};
		\node[draw, circle,inner sep=2pt,label={[above]\(v_4\)}] (c) at (6,10) {};
		\node[draw, circle,inner sep=2pt,label={[above]\(v_5\)}] (d) at (9,12) {};
		\node[draw, circle,inner sep=2pt,label={[below,yshift=-0.3cm]\(v_6\)}] (e) at (9,8) {};
		\node[draw, circle,inner sep=2pt,label={[right]\(v_7\)}] (f) at (12,10) {};
		\node[] (dots) at (14,10) {$ \dots $};
		\node[draw, circle,inner sep=2pt,label={[left]\(v_{n-3}\)}] (g) at (17,10) {};
		\node[draw, circle,inner sep=2pt,label={[above]\(v_{n-2}\)}] (h) at (20,12) {};
		\node[draw, circle,inner sep=2pt,label={[below,yshift=-0.3cm]\(v_{n-1}\)}] (i) at (20,8) {};
		\node[draw, circle,inner sep=2pt,label={[right]\(v_n\)}] (t) at (23,10) {};
		
		\draw[->,thick] (s)--node[above]{\(j\)}(a);
		\draw [->,thick] (a)--node[above]{\(j\)}(c);
		\draw [->,thick] (s)--node[below]{\(j\)}(b);
		\draw [->,thick] (b)--node[below]{\(j\)}(c);
		\draw [->,thick] (c)--node[above]{\(j\)}(d);
		\draw [->,thick] (c)--node[below]{\(j\)}(e);
		\draw [->,thick] (d)--node[above]{\(j\)}(f);
		\draw [->,thick] (e)--node[below]{\(j\)}(f);
		\draw [->,thick] (g)--node[above]{\(j\)}(h);
		\draw [->,thick] (g)--node[below]{\(j\)}(i);	
		\draw [->,thick] (h)--node[above]{\(j\)}(t);
		\draw [->,thick] (i)--node[below]{\(j\)}(t);		
		\end{tikzpicture}
		\caption{An instance with exponential number of ordinally efficient paths.}
		\label{expeff}
	\end{figure}
Consequently, there are \(2^{\frac{n-1}{3}}\) paths from \(v_1\) to \(v_n\) with the same ordinal path vector \(f(P) = (j,\ldots,j)\) with len\((f(P))=\frac{2n-2}{3}\). Therefore, we obtain \(2^{\frac{n-1}{3}}\) ordinally efficient paths, which concludes the proof.

\end{proof}

However, we can show that the number of ordinally non-dominated paths vectors from \(s\) to every other node in the graph is polynomially bounded.
\begin{theorem}[Number of distinct ordinal paths vectors]\label{thm:sizelist}\mbox{}
	
	For all \(v \in V\), it holds that the number of distinct ordinal paths vectors from \(s\) to \(v\) is polynomially bounded by \(\mathcal{O}(n^K)\).
\end{theorem}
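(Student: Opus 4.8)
The plan is to give a direct counting argument: because $G$ is acyclic, every $s$-$v$-path is simple and hence uses at most $n-1$ arcs, so the only objects that can occur are multisets of at most $n-1$ ordinal levels drawn from $\mathcal{C}=\{1,\dots,K\}$, and for fixed $K$ there are only polynomially many such multisets. Throughout I read ``distinct ordinal path vectors from $s$ to $v$'' as the number of distinct sorted vectors $\text{sort}(f(P))$ over all $s$-$v$-paths $P$ (equivalently, the number of equivalence classes under $P\mapsto\text{sort}(f(P))$), which is the quantity that matters for dominance, efficiency and the labeling algorithm; note that counting \emph{unsorted} vectors of a fixed length would already give $K^{\ell}$ possibilities and cannot be polynomially bounded, so the sorted interpretation is the intended one.

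First I would fix $v\in V$ and an arbitrary $s$-$v$-path $P=(s,e_1,v_1,\dots,e_n,t)$-type sequence ending in $v$. Since $G$ is directed and acyclic, $P$ cannot repeat a vertex (a repeated vertex would close a directed cycle), so $P$ is simple and $\ell:=\text{len}(f(P))$ satisfies $0\le \ell\le n-1$. Next, observe that $\text{sort}(f(P))$ is a non-decreasing tuple with entries in $\mathcal{C}=\{1,\dots,K\}$, and such a tuple is completely determined by its length $\ell$ together with the multiplicities $(a_1,\dots,a_K)$, where $a_c:=|\{i:(f(P))_i=c\}|$ and $\sum_{c=1}^{K}a_c=\ell$. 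Consequently, the number of distinct sorted ordinal path vectors of length exactly $\ell$ from $s$ to $v$ is at most the number of non-negative integer solutions of $a_1+\dots+a_K=\ell$, which is $\binom{\ell+K-1}{K-1}$.

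Then I would sum over the admissible lengths $\ell\in\{0,1,\dots,n-1\}$: the total number of distinct sorted ordinal path vectors from $s$ to $v$ is at most $\sum_{\ell=0}^{n-1}\binom{\ell+K-1}{K-1}=\binom{n-1+K}{K}$, using the hockey-stick identity. Since $K$ is fixed by Assumption, $\binom{n-1+K}{K}=\tfrac{1}{K!}\prod_{i=0}^{K-1}(n-1+K-i)$ is a polynomial in $n$ of degree $K$, hence $\mathcal{O}(n^K)$. This proves the claim, and a fortiori bounds the number of ordinally non-dominated path vectors from $s$ to $v$, since $\mathcal{P}_N$ restricted to $v$ is a subset of the set of all sorted path vectors to $v$.

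I do not expect a genuine obstacle; the only points requiring care are (i) deriving $\ell\le n-1$ from acyclicity/simplicity rather than assuming it, and (ii) making explicit that the count is over sorted vectors (equivalence classes), as the unsorted count is exponential. If one instead wanted the statement phrased directly for the non-dominated vectors, the identical estimate applies verbatim by the subset argument.
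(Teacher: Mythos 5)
Your proof is correct and follows essentially the same route as the paper: count the sorted vectors of a fixed length $\ell$ as multisets of size $\ell$ from $K$ levels, giving $\binom{K+\ell-1}{\ell}$, then sum over lengths up to $n-1$ and use the hockey-stick identity to obtain an $\mathcal{O}(n^K)$ bound for fixed $K$. Your explicit remarks that acyclicity bounds the path length and that the count is over sorted vectors are helpful clarifications, but the argument is the paper's.
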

\begin{proof}
	The number of distinct ordinal path vectors corresponding to arc sets with exactly \(i\) arcs is the same as the number of possibilities to pick \(i\) elements from a set of \(K\) elements with replacement and without order. This is equal to \({K+i-1\choose i}\). Note, that this denotes an upper bound for the number of ordinal path vectors for node \(v\) of "length" \(i\). Further, a path contains at most \(n-1\) arcs. Summing up over all possible paths "lengths" results in
	\begin{align*}
	\sum_{i=1}^{n-1} {K+i-1\choose i} & = \frac{n{K+n-1\choose n}}{K} - 1 \\
	& = \frac{n(K+n-1)!}{K(K-1)!n!} -1 \\
	& = \frac{(n+K-1)(n+K-2)\cdot\ldots\cdot n}{K(K-1)!}-1 \in \mathcal{O}(n^K)
	\end{align*}
	The last term is polynomially bounded by \(\mathcal{O}(n^K)\) and so are the label sets at each node \(v \in V\).
	
\end{proof}

\begin{definition}[Absolute frequency]\mbox{}
	
	Let \(P \in \mathcal{P}\). Then, \(h(P,c)\) denotes the absolute frequency of \(c \in \mathcal{C}\) in \(P\), i.e., \(h(P,c) = |\{e \in P: f(e) = c\}|\). \(S(P)\) denotes the vector of the absolute frequencies of all ordinal levels with respect to \(P\), i.e., \(S(P) = (h(P,1),\ldots,h(P,K))^\top\).
\end{definition}

We define the following lexicographic order:
\begin{definition}[Lexicographic order]\mbox{}
	
	Let \(P_1,\ P_2 \in \mathcal{P}\). Then, \(S(P_1) \geq_{\text{lex}} S(P_2)\) if and only if \(h(P_1,i^*) > h(P_2,i^*)\) or \(S(P_1) = S(P_2)\), where \(i^* = \min\{i : h(P_1,i) \neq h(P_2,i)\}\).
\end{definition}

Now, we define \((\mathsf{LexMax})\) to be the optimization problem of finding the \(s\)-\(t\)-path whose corresponding absolute frequency vector is lexicographic maximal.
\begin{equation*}\tag{\(\mathsf{LexMax}\)}
\begin{array}{lll}
\max_{\text{lex}} &S(P)& \\
\text{s.t.} & P \in \mathcal{P}
\end{array}	
\end{equation*}

\begin{theorem}[Optimality]\mbox{}
	
	Let \(P^*\) be optimal for \((\mathsf{LexMax})\). Then \(\text{sort}(f(P^*))\) is ordinally non-dominated.
\end{theorem}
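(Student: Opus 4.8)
The plan is to argue by contradiction. Suppose $P^*$ is optimal for $(\mathsf{LexMax})$ but $\text{sort}(f(P^*))$ is not ordinally non-dominated; then $P^*$ is not ordinally efficient, so there is a path $P'\in\mathcal P$ with $P'\precqq P^*$ and $\text{sort}(f(P'))\ne\text{sort}(f(P^*))$, and the goal is to derive $S(P')>_{\text{lex}}S(P^*)$, contradicting the optimality of $P^*$. The central device is the cumulative frequency $N_j(P):=\sum_{i=1}^{j}h(P,i)$, the number of arcs of $P$ whose level lies in $\{1,\dots,j\}$; since $\text{sort}(f(P))$ is non-decreasing these are exactly its first $N_j(P)$ entries. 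The first step is to convert $P'\precqq P^*$ into lower bounds on $N_j(P')$ in terms of $N_j(P^*)$: writing $m=\text{len}(f(P'))$, $n=\text{len}(f(P^*))$ and using that $\preceq$ is a complete order on $\mathcal C$, every entrywise comparison occurring in Definition~\ref{def:dominance}, of the form ``$p$-th smallest arc of $P'\preceq q$-th smallest arc of $P^*$'', says that whenever the $q$-th smallest arc of $P^*$ is $\le j$ then the first $p$ arcs of $P'$ are $\le j$. Unwinding the three cases this yields $N_j(P')\ge N_j(P^*)$ for all $j$ (with $N_K(P')=N_K(P^*)=n$) when $m=n$; $N_j(P')\ge N_j(P^*)+(m-n)$ whenever $N_j(P^*)\ge1$ when $m>n$; and only the weaker $N_j(P')\ge\min\{N_j(P^*),m\}$ when $m<n$.

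The contradiction is then read off at a single level. Let $i^\ast=\min\{i:h(P',i)\ne h(P^*,i)\}$, which exists because the sorted vectors differ (so $S(P')\ne S(P^*)$). By minimality $N_{i^\ast-1}(P')=N_{i^\ast-1}(P^*)$, hence $h(P',i^\ast)-h(P^*,i^\ast)=N_{i^\ast}(P')-N_{i^\ast}(P^*)$, and it suffices to show this is positive. For $m=n$ the bound $N_{i^\ast}(P')\ge N_{i^\ast}(P^*)$ together with $i^\ast$ being the first disagreement forces the inequality to be strict. For $m>n$, the equalities $N_j(P')=N_j(P^*)$ for $j<i^\ast$ are incompatible with $N_j(P')\ge N_j(P^*)+(m-n)$ unless $N_j(P^*)=0$ for all $j<i^\ast$; then at $j=i^\ast$ either $N_{i^\ast}(P^*)\ge1$, so the $(m-n)$-bound gives a strict increase, or $N_{i^\ast}(P^*)=0$, so $h(P^*,i^\ast)=0<h(P',i^\ast)$. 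In both cases $S(P')>_{\text{lex}}S(P^*)$, the desired contradiction.

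The step I expect to be the real obstacle is the case $m<n$, where $P'$ is strictly shorter than $P^*$. There the only available bound, $N_{i^\ast}(P')\ge\min\{N_{i^\ast}(P^*),m\}$, carries no information precisely when $P^*$ contains more than $m$ arcs below some level: a shorter path simply cannot display that many low-level arcs, so comparing $P^*$ against $P'$ alone does not produce the lexicographic contradiction. One must therefore use that $P^*$ is lex-optimal over \emph{all} $s$-$t$-paths, not merely efficient within its own length class. My intended route is an exchange/augmentation argument: starting from $P'$ (short and low-level) and the structure of $P^*$, build a third $s$-$t$-path that keeps $P'$'s favourable low-level profile but picks up enough additional low-level arcs along $P^*$ to overtake $P^*$ lexicographically. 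Verifying that the spliced object is again a simple $s$-$t$-path in a general acyclic graph, and that the frequency bookkeeping moves in the right direction, is the delicate point — and the place where the real content of the proof must lie.
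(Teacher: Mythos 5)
Your cases $m=n$ and $m>n$ are sound, and in fact sharper than the paper's own treatment: the paper only details the equal-length case (via the first position where the sorted vectors disagree) and dismisses both unequal-length cases as ``analogous''; your cumulative-count bound $N_j(P')\ge N_j(P^*)+(m-n)$ whenever $N_j(P^*)\ge 1$ is exactly what is needed to make the $m>n$ case rigorous. The genuine gap is the case $m<n$, which you leave as a plan rather than a proof. Your instinct that no contradiction can be extracted from $P'\precqq P^*$ alone is correct, but the splicing/augmentation argument you propose cannot close the gap, because under the paper's definitions the statement actually fails in that configuration. Take the paper's own Figure~\ref{antisymmetric} with all three arcs carrying the same level $j$: the two-arc path $Q=(s,a,t)$ satisfies $h(Q,j)=2>1=h(P,j)$ for the one-arc path $P=(s,t)$, so $Q$ is the unique optimum of $(\mathsf{LexMax})$; yet $P\precqq Q$ because $\text{sort}(f(P))=(j)\precqq \text{sort}^{forw}_1(f(Q))$, and $\text{sort}(f(P))\neq\text{sort}(f(Q))$, so $\text{sort}(f(Q))$ is dominated. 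There is no third $s$-$t$-path available to splice, so no exchange argument over all $s$-$t$-paths can rescue this case in general: a strictly shorter dominating path has, if anything, fewer arcs at each level, while lexicographic maximization of the frequency vector rewards the longer path.

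So your write-up is incomplete as a proof of the stated theorem, and the missing piece is not something you could have supplied by more care at the spliced-path bookkeeping: the place you flagged as ``where the real content must lie'' is precisely where the paper's own proof is flawed, since its claim that $|P|<|P^*|$ is ``shown analogously'' does not survive the counterexample above. If you want a correct statement along these lines, you must either exclude domination by strictly shorter paths (i.e., prove only that no $P'$ with $\text{len}(f(P'))\ge\text{len}(f(P^*))$ and a different sorted vector satisfies $P'\precqq P^*$, which your two completed cases already establish), or modify the efficiency notion so that two paths that dominate each other, as $P$ and $Q$ above do, are treated as equivalent rather than as mutually dominating; as written, the theorem itself is false, not merely your proof unfinished.
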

\begin{proof}
	Let \(P^*\) be optimal for \((\mathsf{LexMax})\) and assume there exists a path \(P \in \mathcal{P}\) such that \(P \precqq P^*\). Now, we have to distinguish three cases:
	\begin{itemize}
		\item[1)] Assume that \(|P| = |P^*|\), i.e., \(\text{sort}(f(P)) \precqq \text{ sort}f(P^*))\). Without loss of generality we can assume that \(\text{sort}(f(P)) \prec \text{ sort}(f(P^*))\), otherwise we are finished. Then, it exists \(i^* = \argmin_{i \in \{1, ..., |P| \}} \text{sort}(f(P))^i \neq \text{ sort}(f(P^*))^i\), where \(\text{sort}(f(.))^i\) denotes the \(i\)-th component of \(\text{sort}(f(.))\). Since the entries are sorted, it follows \(h(P, \text{ sort}(f(P))^{i^*})) > h(P^*, \text{ sort}(f(P))^{i^*}) \) and thus, \(S(P) \geq_{\text{lex}} S(P^*)\) which is a contradiction.
		\item[2)] The cases \(|P| > |P^*|\) and \(|P| < |P^*|\) are shown analogously.
	\end{itemize}
\end{proof}

Due to the last theorem, a path \(P\) maximizing \((\mathsf{LexMax})\) is ordinally efficient. We can find such a path by manipulating the arc weights of the graph and then computing a longest path. Note that the longest path problem can be solved in directed acyclic graphs in linear time (cf. \cite{schrijver}). For \(k \in \mathcal{C}\), let \(m_k\) denote the number of arcs \(e \in E \) such that \(f(e) = k\). Further, we denote by \(d_k\) the number of digits in order to store \(m_k\), which is given by \(d_k \coloneqq \lceil \log_{10}(m_k) \rceil\). We then can calculate the arc weight of arc \(e \in E\) with  \(f(e) = k\) in the following manner: \( w_e=  1 \cdot 10^{-1-\sum_{i=1}^{k-1} d_k} \). 
By construction, computing a longest path with respect to the new arc weights yields an optimal solution of \((\mathsf{LexMax})\). Note that the size of the modified instance is polynomial in the original input size.

\section{Labeling Algorithm}\label{sec:algorithm}
In this section, we propose a labeling algorithm with a label selection strategy to compute the set of ordinally non-dominated paths vectors from \(s\) to \(t\). During the execution of the algorithm, we store several labels at each node \(v \in V\) corresponding to different \(s\)-\(v\)-paths. Therefore, let \(v \in V\) be a node in \(G\) and let \(\mathcal{L}(v)\) denote the set of labels at node \(v\), where each label \(L \in \mathcal{L}(v)\) is a tuple \((f(P_L),\text{Pred}_L)\). The ordinal path vector associated with the \(s\)-\(v\)-path \(P_L\) with respect to label \(L\) is depicted by \(f(P_L)\) and \(\text{Pred}_L\) denotes a sorted list of nodes on the current \(s\)-\(v\)-path \(P_L\) with respect to label \(L\) at node \(v\). By \(\mathcal{L}\), we denote the set of all labels at all nodes, i.e., \(\mathcal{L} \coloneqq \bigcup_{v \in V} \mathcal{L}(v)\).
Moreover, we store a set of temporary labels, called \texttt{Temp}.
For the purpose of extending a label at a node \(v\) by an arc \((v,w)\), we define the following operation.
\begin{definition}[Label extension]\mbox{}
	
	Let \(L=(f(P_L),\text{Pred}_L)^\top\) be a label in \(\mathcal{L}(v)\) with \(v \in V\) and let \((v,w) \in E\). Then, \(L \oplus f(v,w) \coloneqq \left(\begin{array}{c} \left(f(P_L), f(v,w)\right)\\ \text{Pred}_L.\text{append}(w) \end{array}\right)\), where \(\text{Pred}_L.\text{append}(w)\) indicates that node \(w\) is appended at the end of the sorted list \(\text{Pred}_L\).
\end{definition}

Note that we are interested in a minimal complete set of ordinally non-dominated paths vectors from \(s\) to \(t\). Thus, we do not include a label at a specific node if its corresponding sorted ordinal path vector coincides with another sorted ordinal path vector already present at that node. Consequently, if we say that a label \(L\) is already contained in \(\mathcal{L}(v)\) for some \(v \in V\), we mean that \(\text{sort}(f(P_L))\) is equal to \(\text{sort}(f(P_{L'}))\) for some \(L' \in \mathcal{L}(v)\) although \(\text{Pred}_L\) and \(\text{Pred}_{L'}\) might differ.

The algorithm then works as follows.
Initially, we create a label at source node \(s\) with an empty ordinal path vector and a list containing only \(s\). We insert this label in the set of temporary labels, called \texttt{Temp}. In each iteration, we choose an arbitrary label \(L\) at node \(v\) from \texttt{Temp}, remove that label from \texttt{Temp}, investigate all outgoing arcs from node \(v\), and create a new label \(L'\) by extending the label \(L\) along the outgoing arc.
In the case that the end node of the outgoing arc is not the target node \(t\), we include the new label \(L'\) at the corresponding label set of the end node (for modifications, see Section \ref{sec:practicalImprovements}). This is due to the fact that Bellman's principle of optimality does not hold, cf. Example \ref{ex:alg+bellman}.
In the case that the end node of the outgoing arc is equal to \(t\), we temporarily set a binary flag of \(L'\) to \(1\). If this flag remains \(1\), we include \(L'\) in \(\mathcal{L}(t)\), otherwise not. This flag only remains to be equal to \(1\), if \(L'\) is not dominated by any other label in \(\mathcal{L}(t)\).
Note that in the case of the end node being equal or unequal to \(t\), we have to check if the new label is already present in the label set of the end node. This is true, because we aim to find the set of ordinally non-dominated paths vectors.
We iterate as long as the set of temporary labels is non-empty. Due to the fact that the number of ordinally non-dominated paths vectors from \(s\) to every other node in the graph  is polynomially bounded, we also know that the number of labels at each node is polynomially bounded throughout the algorithm. Further, in every iteration, we remove exactly one label from \texttt{Temp}. Thus, the algorithm will eventually terminate. After termination, the minimal complete set of ordinally efficient \(s\)-\(t\)-paths can be recovered by using the sorted list of nodes \(\text{Pred}_L\) for all \(L \in \mathcal{L}(t)\).

\begin{algorithm}
	\caption{Ordinal Labeling Algorithm}
	\textbf{Input:} A Digraph \(G=(V,E)\), loopless, no parallels, source \(s\), sink \(t\) and a function \(f: E \rightarrow \mathcal{C}\)\\
	\textbf{Output:} All ordinally non-dominated paths vectors from \(s\) to \(t\), i.e., \(\mathcal{P}_N\)\\
	\textbf{Initialization:} Create label \(L \coloneqq \left(\begin{array}{c} (\cdot)\\ \lbrack s\rbrack \end{array}\right)\) at node \(s\), i.e., \(\mathcal{L}(s) = \{L\}\), let\\\vspace{0.2cm} \mbox{\texttt{Temp} \(= \{L\}\)}
	\begin{algorithmic}[1]
	\While{\(\texttt{Temp} \neq \emptyset\)}
		\State Let label \(L = \left(\begin{array}{c} f(P_L)\\ Pred_L \end{array}\right)\) be a label in \(\mathcal{L}(v) \cap \texttt{Temp}\) with \(v\neq t\)
		\State \(\texttt{Temp} = \texttt{Temp}\backslash \{L\}\)
		\For{\((v,w) \in E\)}
			\State \(L' = L \oplus f(v,w)\)
			\If{\(L' \notin \mathcal{L}(w)\)}
				\If{\(w \neq t\)}
					\State \(\mathcal{L}(w) = \mathcal{L}(w) \cup \{L'\}\)
					\State \(\texttt{Temp} = \texttt{Temp}\cup \{L'\}\)
				\EndIf
				\If{\(w = t\)}
					\State \(\text{flag}(L') = 1\)
					\For{\(L^* \in \mathcal{L}(t)\)}
						\If{\(P_{L'} \precqq P_{L^*} \AND P_{L^*} \precnqq P_{L'}\)}
							\State \(\mathcal{L}(t) = \mathcal{L}(t)\backslash \{L^*\}\)
						\EndIf
						\If{\(P_{L^*} \precqq P_{L'} \AND P_{L'} \precnqq P_{L^*}\)}
							\State \(\text{flag}(L') = 0\)
						\EndIf
					\EndFor
					\If{\(\text{flag}(L') = 1\)}
						\State \(\mathcal{L}(t) = \mathcal{L}(t)\cup \{L'\}\)
					\EndIf
				\EndIf
			\EndIf	
		\EndFor
	\EndWhile
	\State \Return \(\mathcal{L}(t)\)
	\end{algorithmic}
\label{alg:ordinal}
\end{algorithm}

\begin{theorem}[Correctness]\mbox{}\label{thm:correctness}
	
	Algorithm \ref{alg:ordinal} correctly computes all ordinally non-dominated paths vectors from \(s\) to \(t\).
\end{theorem}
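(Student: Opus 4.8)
The plan is to verify three things about Algorithm~\ref{alg:ordinal} in order: termination, a completeness invariant for the label sets at every node other than $t$, and the correctness of the dominance filter maintained at $t$. Termination has essentially been argued already: by Theorem~\ref{thm:sizelist} each $\mathcal{L}(v)$ stays of size $\mathcal{O}(n^K)$, a label enters \texttt{Temp} only when it is freshly inserted into some $\mathcal{L}(v)$, and each pass through the \textbf{while}-loop removes one label from \texttt{Temp}; hence only finitely many iterations occur. I would only restate this.

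Next I would establish the invariant: after termination, for every $v\in V\setminus\{t\}$ the set $\{\text{sort}(f(P_L)):L\in\mathcal{L}(v)\}$ equals the set of sorted ordinal path vectors of all $s$-$v$-paths, with exactly one label per such vector, and each $\text{Pred}_L$ is a genuine $s$-$v$-path realising it. The ``$\subseteq$'' part and uniqueness are immediate from the construction: labels are obtained only by $\oplus$-extension starting from $s$ (line~5), and the membership test in line~6 forbids duplicating a sorted vector. The ``$\supseteq$'' part is proved by induction along a topological order of $G$ (equivalently, on the number of arcs of the path): the base case $v=s$ is the initial empty label; for the step, write an $s$-$v$-path as $P=(P',e,v)$ with $e=(u,v)$ and $P'$ an $s$-$u$-path with $u\neq t$, use the induction hypothesis to pick $L'\in\mathcal{L}(u)$ with $\text{sort}(f(P_{L'}))=\text{sort}(f(P'))$, observe that $L'$ was placed into \texttt{Temp} when it was added to $\mathcal{L}(u)$ and is therefore eventually selected in line~2, so that $L'\oplus f(u,v)$ is formed. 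The elementary fact that $\text{sort}$ depends only on the multiset of entries gives $\text{sort}(f(P_{L'\oplus f(u,v)}))=\text{sort}(f(P))$, and since labels are never deleted from $\mathcal{L}(v)$ for $v\neq t$, lines~6--9 guarantee that a representative of $\text{sort}(f(P))$ ends up in $\mathcal{L}(v)$.

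The same prefix argument, applied to the predecessor $u\neq t$ of $t$ on an $s$-$t$-path $P$, shows that for every $s$-$t$-path a label representing $\text{sort}(f(P))$ is offered to $\mathcal{L}(t)$ at least once (or, if line~6 blocks it, an equal-sorted label is already present). It then remains to analyse the filter in lines~11--22. Reading $\precnqq$ as the negation of $\precqq$, line~14 deletes from $\mathcal{L}(t)$ exactly the labels strictly dominated by the incoming $L'$ (meaning $P_{L'}\precqq P_{L^*}$ but $P_{L^*}\not\precqq P_{L'}$), and line~17 refuses $L'$ precisely when some currently stored label strictly dominates it. Using transitivity of $\precqq$ one checks that strict domination is preserved under such deletions (if $L''$ strictly dominates $L^*$ and $L^*$ strictly dominates $L'$, then $L''$ strictly dominates $L'$), and since the number of distinct sorted $s$-$t$-vectors is finite, every strictly dominated vector is strictly dominated by one that is itself not strictly dominated. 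Combined with the completeness half this yields the characterisation, independent of the selection order in line~2: at termination $L\in\mathcal{L}(t)$ iff $\text{sort}(f(P_L))$ is not strictly dominated by the sorted vector of any $s$-$t$-path. Finally one identifies this with $\mathcal{P}_N$: if $\text{sort}(f(P))\in\mathcal{P}_N$ then $P$ cannot be strictly dominated, because a strict dominator would necessarily have a different sorted vector (equal sorted vectors force mutual $\precqq$), contradicting efficiency; and if $\text{sort}(f(P))\notin\mathcal{P}_N$ one passes to a maximal element $Q$ of the finite set of dominators of $P$, whose representative is permanently in $\mathcal{L}(t)$ and which bars or deletes $P$'s representative.

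I expect the last step to be the main obstacle, for two reasons. First, $\precqq$ is only a preorder -- by the example following the preorder lemma it is not antisymmetric -- so ``taking a maximal dominator'' must be done in the quotient by mutual domination, and one has to reconcile the definition of ordinally efficient (no dominating path with a \emph{different} sorted vector) with the strict-domination test actually carried out by the algorithm; concretely, one must confirm that these two notions of ``non-dominated $s$-$t$-vector'' coincide, i.e.\ that any $s$-$t$-vector dominated by some path with a different sorted vector is already dominated by a strictly-undominated one. Second, since labels at $t$ may be inserted and later removed, one must check that the arbitrary choice in line~2 cannot leave a strictly-undominated vector unrepresented at termination; this follows once the characterisation ``$\mathcal{L}(t)=$ non-strictly-dominated vectors'' is proved independently of that order, as sketched above.
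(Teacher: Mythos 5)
Your overall decomposition is the same as the paper's: (a) every sorted ordinal path vector of an \(s\)-\(t\)-path is offered to \(\mathcal{L}(t)\) because its prefixes are represented at the intermediate nodes, and (b) the filter in \texttt{lines 11--22} keeps exactly the right vectors. Your part (a), done as a forward induction along a topological order, is the same substance as the paper's backward descent along a hypothetical missing path, and your analysis of the filter (transitivity of the asymmetric part, order-independence of the final content of \(\mathcal{L}(t)\), permanence of undominated labels) is in fact more careful than the paper's part (b), which merely asserts that only non-dominated labels are kept.

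The genuine problem is the final identification step that you yourself flag as the main obstacle: the two notions of ``non-dominated'' do \emph{not} coincide, so the coincidence you propose to ``confirm'' cannot be proved. The algorithm tests the asymmetric part of \(\precqq\) (delete \(L^*\) only if \(P_{L'}\precqq P_{L^*}\) and \(P_{L^*}\precnqq P_{L'}\)), whereas the paper's definition of ordinal efficiency excludes \(P\) as soon as some \(P'\) with a \emph{different} sorted vector satisfies \(P'\precqq P\), even if \(P\precqq P'\) holds as well. Mutual domination with different sorted vectors is possible --- exactly the antisymmetry counterexample of Figure~\ref{antisymmetric}: the paths with vectors \((j)\) and \((j,j)\) dominate each other, so under the literal definition neither is ordinally efficient and \(\mathcal{P}_N=\emptyset\), yet neither branch of the filter fires and Algorithm~\ref{alg:ordinal} returns both vectors. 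Consequently your closing argument (``pass to a maximal dominator \(Q\), whose representative bars or deletes \(P\)'s representative'') fails precisely when the only dominators of \(P\) dominate it mutually: their representatives neither refuse nor delete \(P\)'s label. The paper's own proof silently sidesteps this by reading ``non-dominated'' as ``not dominated in the asymmetric-part sense,'' i.e.\ the relation actually checked in \texttt{lines 13--18}; under that reading your invariant plus your filter analysis do complete the proof of Theorem~\ref{thm:correctness}, and the honest fix is to state and prove the theorem with that interpretation (or to note the discrepancy with the stated definition of \(\mathcal{P}_N\)), rather than to attempt the impossible equivalence.
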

\begin{proof}
	To prove the correctness of the algorithm, we show that (a) after termination of the algorithm all ordinally non-dominated paths vectors from \(s\) to \(t\) defined by the labels in \(\mathcal{L}(t)\) are found and (b) that all labels in \(\mathcal{L}(t)\) define ordinally non-dominated paths vectors from \(s\) to \(t\).
	\begin{itemize}
		\item[(a)] For the first part assume that there exists a path \(P_{st}\) from \(s\) to \(t\) whose corresponding ordinal path vector \(f(P_{st})\) is non-dominated and not found by the algorithm. Further, suppose that its sorted version is not already contained in \(\mathcal{L}(t)\), i.e., \(\text{sort}(f(P_{st})) \neq \text{sort}(f(P_L))\) for all \(L \in \mathcal{L}(t)\). This case is only possible if the label \(\overline{L}\) corresponding to \(\text{sort}(f(P_{st}))\) has not been found. Consequently, we can assume that \(\overline{L}\) has never been included in \texttt{Temp}. Let \(u\) denote the predecessor node of \(t\) on \(P_{st}\) and let \(P_{su}\) denote the corresponding subpath of \(P_{st}\). Then, either the ordinal path vector \(f(P_{su})\) corresponding to subpath \(P_{su}\) has not been found or its sorted version was already contained in \(\mathcal{L}(u)\). If its sorted version was already contained in \(\mathcal{L}(u)\), then there exists a label \(L'\) at \(\mathcal{L}(u)\) with \(\text{sort}(f(P_{L'})) = \text{sort}(f(P_{su}))\). Extending \(L'\) by the arc \((u,t)\) would result in the same sorted ordinal path vector as for \(P_{st}\) which is a contradiction to our assumption. Thus, we may assume that \(P_{su}\) has not been found. Repeating this argument backwards along the path \(P_{st}\), we can see that for the first node  \(v \neq s\) on \(P_{st}\) the corresponding path, i.e., \(P_{sv}\), has not been found. This is impossible since we create the label \(L^* = (f(s,v), \lbrack sv\rbrack)^\top\) in the first iteration of the while-loop. 
		
		\item[(b)] The second part of the proof follows immediately from the algorithm (\texttt{lines 10 - 18}) as we only include a label in \(\mathcal{L}(t)\) if it is non-dominated. Further, for each non-dominated label which is found during the execution of the algorithm, we check whether it dominates non-dominated labels already contained in the label set at node \(t\). Thus, at termination of the algorithm all labels at \(\mathcal{L}(t)\) define ordinally non-dominated path vectors from \(s\) to \(t\), which concludes the proof.
	\end{itemize}
	
\end{proof}

\begin{theorem}[Running time]\mbox{}
	
	Algorithm \ref{alg:ordinal} has a worst-case running time complexity of \(\mathcal{O}(m\cdot \log n\cdot n^{3K+2})\).
\end{theorem}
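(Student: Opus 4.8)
The plan is to bound the running time as the product of two quantities: the number of executions of the \texttt{while}-loop, and the worst-case cost of a single execution of its body.

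\emph{Counting iterations.} Each pass through the \texttt{while}-loop removes exactly one label from \texttt{Temp}, and a label enters \texttt{Temp} only on line~9, which is guarded by the test $L'\notin\mathcal{L}(w)$. Since that test rejects any label whose sorted ordinal path vector already occurs at $w$, every distinct sorted $s$-$v$-vector accounts for at most one insertion into \texttt{Temp}; hence the number of insertions, and therefore the number of iterations, is at most $1+\sum_{v\in V}|\mathcal{L}(v)|$. By Theorem~\ref{thm:sizelist} we have $|\mathcal{L}(v)|\in\mathcal{O}(n^{K})$ for every $v$, so the \texttt{while}-loop runs $\mathcal{O}(n^{K+1})$ times.

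\emph{Cost of one iteration.} Fix an iteration processing a label $L$ at a node $v$. Its body loops over the arcs leaving $v$; bounding the number of such arcs by $m$ and writing $W$ for the work per arc, one iteration costs $\mathcal{O}(m\cdot W)$. For a single arc $(v,w)$ the algorithm (i) forms $L'=L\oplus f(v,w)$, i.e., appends one level to $f(P_L)$, copies and extends the list $\text{Pred}_L$ of length at most $n$, and produces $\text{sort}(f(P_{L'}))$; (ii) tests $L'\notin\mathcal{L}(w)$ by comparing $\text{sort}(f(P_{L'}))$ against the $\mathcal{O}(n^{K})$ vectors stored at $w$; and (iii) if $w=t$, scans the $\mathcal{O}(n^{K})$ labels $L^{*}\in\mathcal{L}(t)$, evaluating $P_{L'}\precqq P_{L^{*}}$ and $P_{L^{*}}\precqq P_{L'}$ and possibly deleting some $L^{*}$ from $\mathcal{L}(t)$. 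A single evaluation of $\precqq$ costs $\mathcal{O}(n\log n)$ in the worst case: by Definition~\ref{def:dominance} it may require sorting the longer of the two vectors (of length at most $n-1$) and then comparing a prefix or a suffix of it component-wise, which is exactly where the $\log n$-factor enters (equivalently, it can be charged to maintaining the label sets and \texttt{Temp} in balanced structures). Thus $W$ is dominated by the $\mathcal{O}(n^{K})$ comparisons in (ii)–(iii) at $\mathcal{O}(n\log n)$ each, plus the $\mathcal{O}(n)$ list manipulations in (i).

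\emph{Combining, and the main obstacle.} Multiplying the $\mathcal{O}(n^{K+1})$ iterations by the per-iteration cost and collecting the factors — the $\mathcal{O}(n^{K})$-sized label sets contribute once through the uniqueness test, once through the domination scan at $t$, and once more through the iteration count, the arc-loop contributes $m$, and the remaining low-order powers of $n$ together with $\log n$ come from the vector lengths, the $\text{Pred}$-list copies and the sorting inside $\precqq$ — yields the stated bound $\mathcal{O}(m\cdot\log n\cdot n^{3K+2})$; the stated exponent is a deliberately generous over-estimate, and one only needs the coarse bounds above rather than a tight count. I expect the delicate part to be precisely this bookkeeping: establishing (a) that each sorted vector generates only a bounded amount of \texttt{Temp}-traffic, so that the iteration count is genuinely polynomial and each label is processed only once, and (b) a clean worst-case bound for the \emph{length-aware} dominance relation $\precqq$ and for equality testing under the minimal-complete-set convention, both of which are more involved than in the classical shortest-path setting. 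All of this overhead is unavoidable here because Bellman's principle fails, so no pruning at intermediate nodes is possible and the label sets remain as large as Theorem~\ref{thm:sizelist} permits at every node.
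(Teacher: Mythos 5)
Your proposal is correct and follows essentially the same route as the paper's proof: bound the number of \texttt{while}-iterations by \(\mathcal{O}(n^{K+1})\) via Theorem~\ref{thm:sizelist}, charge each iteration with the arc loop, the membership test against the \(\mathcal{O}(n^{K})\) labels at the end node, and the dominance scan at \(t\) with its \(\mathcal{O}(n\log n)\) sorting cost, and then multiply the factors generously to reach \(\mathcal{O}(m\cdot\log n\cdot n^{3K+2})\). Your per-iteration accounting is in fact slightly tighter than the paper's (it would already give exponent \(2K+2\)), but since the claim is only an upper bound this discrepancy is harmless and mirrors the paper's own loose combination of the partial bounds.
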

\begin{proof}
	From Theorem \ref{thm:sizelist}, it follows that the number of ordinally non-dominated labels  at each node is bounded by \(\mathcal{O}(n^K)\). Since \(|V| = n\), the amount of work for the while-loop is bounded by \(\mathcal{O}(n^{K+1})\). In each iteration, we investigate all outgoing arcs from a node, create a new label \(L'\) and check if the created label is already contained in the label set at the end node of the arc under consideration. This can be done in \(\mathcal{O}(m\cdot n^{K})\). We conduct this operation at most \(n^{K+1}\) times which is bounded by \(\mathcal{O}(m\cdot n^{2K+1})\).
	If the end node of the considered arc is not the target node \(t\), we only execute constant time operations. If the end node is equal to \(t\), we have to do a dominance check for each label in the label set at node \(t\). For the dominance check we have to determine the lengths of the ordinal paths vectors which can be done in \(\mathcal{O}(n)\) and we have to sort both ordinal paths vectors which can be done in \(\mathcal{O}(n \log n)\) using a sorting algorithm, e.g., heapsort. Thus, in total, the dominance check for all labels in \(\mathcal{L}(t)\) can be done in \(\mathcal{O}(\log n\cdot n^{K+1})\).
	The remaining operations can be done in constant time. Consequently, the runtime is in \(\mathcal{O}(m\cdot \log n\cdot n^{3K+2})\), which concludes the proof.
	
\end{proof}

\begin{remark}\mbox{}
	
	Note, that \(\mathcal{O}(n^K)\) is a combinatorial asymptotic upper bound for the amount of different labels at node \(t\). However, many labels will be dominated after executing the dominance check such that the number of labels in \(\mathcal{L}(t)\) after termination of the algorithm can expected to be much smaller.
\end{remark}

Note that Bellman's principle of optimality does not hold. Thus, we are not able to delete labels at a node \(v\neq t\) until we reach node \(t\), cf. Example \ref{ex:alg+bellman}.
\begin{example}\label{ex:alg+bellman}\mbox{}
	
	The following example illustrates Algorithm \ref{alg:ordinal} and shows that Bellman's principle of optimality does not hold in general. After termination of the algorithm the label sets at each node are as depicted in Figure \ref{fig:alg+bellman}.
\begin{figure}[h!]
	\begin{minipage}{0.5\textwidth}
		\begin{itemize}
			\item[1)] \(L(s) = \left\{\left( \begin{array}{c} (\cdot)\\ \lbrack s\rbrack \end{array}\right)\right\}\)
			\item[2)] \(L(a) = \left\{\left( \begin{array}{c} (1)\\ \lbrack sa\rbrack \end{array}\right)\right\}\)
			\item[3)] \(L(b) = \left\{\left( \begin{array}{c} (1)\\ \lbrack sb\rbrack \end{array}\right), \left( \begin{array}{c} (1,2)\\ \lbrack sab\rbrack \end{array}\right) \right\}\)
			\item[4)] \(L(t) = \left\{\left( \begin{array}{c} (1,3)\\ \lbrack sbt\rbrack \end{array}\right), \left( \begin{array}{c} (1,2,3)\\ \lbrack sabt\rbrack \end{array}\right) \right\}\)
		\end{itemize}
	\end{minipage}
	\begin{minipage}{0.5\textwidth}
		\centering
		\begin{tikzpicture}[scale=0.5]
		\node[draw, circle,inner sep=2pt] (s) at (0,10) {$s$};
		\node[draw, circle,inner sep=2pt] (a) at (4,13) {$a$};
		\node[draw, circle,inner sep=2pt] (b) at (8,10) {$b$};
		\node[draw, circle,inner sep=2pt] (t) at (14,10) {$t$};
		
		\draw[->,thick] (s)--node[above]{\(1\)}(a);
		\draw [->,thick] (a)--node[above]{\(2\)}(b);
		\draw [->,thick] (s)--node[above]{\(1\)}(b);
		\draw [->,thick] (b)--node[above]{\(3\)}(t);
		
		\end{tikzpicture}
	\end{minipage}
\caption{Algorithm \ref{alg:ordinal} -- Example}
\label{fig:alg+bellman}
\end{figure}
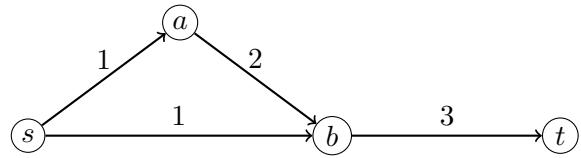
Initially, we create a label \(((\cdot),\lbrack s\rbrack)^\top\) at node \(s\) and put this label into \texttt{Temp}. During the first iteration, we remove this label from \texttt{Temp}, investigate all outgoing arcs from \(s\) and create new labels at nodes \(a\) and \(b\). Consequently, the list of temporary labels is as follows: \(\texttt{Temp} = \{((1),\lbrack sa\rbrack)^\top,((1),\lbrack sb\rbrack)^\top\}\). We choose the first label from \texttt{Temp} and investigate all outgoing arcs from node \(a\) and create a new label \(((1,2),\lbrack sab\rbrack)^\top\) at node \(b\) and put it into the set of temporary labels. If we repeat this procedure until \(\texttt{Temp} = \{((1,3),\lbrack sbt\rbrack)^\top, ((1,2,3),\lbrack sbt\rbrack)^\top\}\), one can verify that at node \(b\) the label \(((1,2),\lbrack sab\rbrack)^\top\) is dominated by \(((1),\lbrack sb\rbrack)^\top\). However, by deleting the dominated label we would not obtain all ordinally non-dominated paths vectors at node \(t\). Thus Bellman's principle of optimality does not hold in general. If we continue this procedure until \(\texttt{Temp} = \emptyset\), we get the label sets as depicted in Figure \ref{fig:alg+bellman}.
\end{example}

Further, the following example reveals the reason for considering directed acyclic graphs instead of general directed graphs.

\begin{example}\label{ex:acyclic}\mbox{}
	
	One can see that in Figure \ref{fig:acyclic} there are two different paths \(P_1 = (s,a,t)\) and \(P_2 = (s,c,b,a,t)\) corresponding to two different ordinally non-dominated paths vectors from \(s\) to \(t\), i.e., \(f(P_1) = (3,1)\) and \(f(P_2) = (3,3,1,1)\).
	
Since we are only interested in simple paths, one would replace \texttt{line 6} of Algorithm \ref{alg:ordinal} by
\begin{algorithm}[h!]
	\begin{algorithmic}
		\If{\(L' \notin \mathcal{L}(w) \AND w \notin Pred_{L}\)}
		\EndIf
	\end{algorithmic}
\end{algorithm}

in order to exclude cycling.

Then, however, if the label \(L = (3,3)\) at node \(b\) is created by the path \(P = (s,a,b)\) first, the label \(L' = (3,3)\) corresponding to \(P' = (s,c,b)\) will never be created at node \(b\). Thus, the ordinally non-dominated path vector \((3,3,1,1)\) will not be found by the algorithm.
	\begin{figure}[h!]
		\centering
		\begin{tikzpicture}
		\node[draw, circle,inner sep=2pt] (s) at (0,10) {$s$};
		\node[draw, circle,inner sep=2pt] (b) at (4,10) {$b$};
		\node[draw, circle,inner sep=2pt] (a) at (2,10) {$a$};
		\node[draw, circle,inner sep=2pt] (c) at (4,8) {$c$};
		\node[draw, circle,inner sep=2pt] (t) at (6,10) {$t$};

		\draw[->,thick] (s)--node[above]{\(3\)}(a);
		\draw [->,thick] (a)edge [bend right] node[above]{\(3\)}(b);
		\draw [->,thick] (b)--node[above]{\(3\)}(t);
		\draw [->,thick] (s)--node[above]{\(3\)}(c);
		\draw[->,thick] (c)--node[right]{$3$}(b);
		\draw[->,thick] (b)edge [bend right] node[above]{$1$}(a);
		\draw[->,thick] (a)edge [bend angle=60,bend left] node[above]{$1$}(t);
		\end{tikzpicture}
		\caption{Algorithm \ref{alg:ordinal} only works on acyclic graphs.}
		\label{fig:acyclic}
	\end{figure}
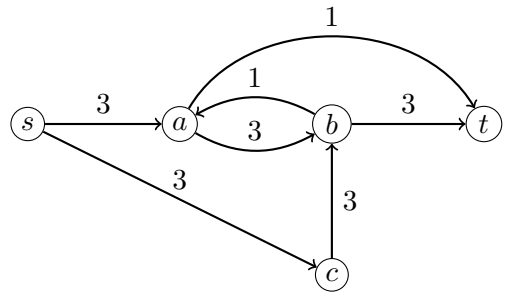
\end{example}

\begin{remark}\mbox{}
	
	By executing a brute-force algorithm, one could obtain the set of ordinally non-dominated paths vectors from \(s\) to \(t\) for general directed graphs: Just enumerate all simple paths from \(s\) to \(t\) and conduct a dominance check at node \(t\). Obviously, this approach yields an exponential worst-case complexity.
\end{remark}

\section{Practical Improvements}\label{sec:practicalImprovements}
In this section, we investigate two modifications of Algorithm \ref{alg:ordinal} for improving its efficiency in practical applications.

\subsection{Modification 1}
During the execution of Algorithm \ref{alg:ordinal}, new labels are created at each node \(v \in V\backslash\{t\}\) as long as this label is not already present at the node under consideration. We can improve this procedure by excluding a label \(L\) at node \(w \in V\) whenever there exists another label \(L'\) at the same node with \(\text{len}(f(P_{L})) = \text{len}(f(P_{L'}))\) and \(\text{sort}(f(P_{L'})) \preceq \text{sort}(f(P_L))\). 

Thus, we replace \texttt{lines 8 and 9} in Algorithm~\ref{alg:ordinal} with the modification described below, cf. Modification 1.
\begin{algorithm}[h!]
	\textbf{Modification 1}
	\vspace{-0.3cm}
	
	\noindent\rule{14.7cm}{0.4pt}
	\begin{algorithmic}
			\State \(\text{flag}(L') = 1\)
			\For{\(L^* \in \mathcal{L}(w)\)}
				\If{\(\text{len}(f(P_{L'})) = \text{len}(f(P_{L^*}))\)}
					\If{\(\text{sort}(f(P_{L'})) \preceq \text{sort}(f(P_{L^*}))\)}
						\State \(\mathcal{L}(w) = \mathcal{L}(w)\backslash\{L^*\}\)
					\EndIf
					\If{\(\text{sort}(f(P_{L^*})) \preceq \text{sort}(f(P_{L'}))\)}
						\State \(\text{flag}(L') = 0\)
					\EndIf
				\EndIf
			\EndFor
			\If{\(\text{flag}(L') = 1\)}
				\State \(\mathcal{L}(w) = \mathcal{L}(w) \cup \{L'\}\)
				\State \(\texttt{Temp} = \texttt{Temp} \cup \{L'\}\)
			\EndIf
	\end{algorithmic}
\end{algorithm}

\begin{corollary}[Correctness]\mbox{}
	
	Algorithm \ref{alg:ordinal} with Modification 1 correctly computes all ordinally non-dominated paths vectors from \(s\) to \(t\).
\end{corollary}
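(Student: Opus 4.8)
The plan is to leverage the proof of Theorem~\ref{thm:correctness} and re-examine only the part that Modification~1 actually changes. Soundness --- that every label left in $\mathcal{L}(t)$ encodes an ordinally non-dominated $s$-$t$-vector --- is unaffected, because Modification~1 only rewrites \texttt{lines 8--9}, which sit inside the branch $w\neq t$, and it leaves the handling at node $t$ (\texttt{lines 10--18}) verbatim; so part (b) of the proof of Theorem~\ref{thm:correctness} carries over unchanged. Thus everything reduces to completeness: I must show that the extra pruning at interior nodes --- discarding a label $L$ at $w\neq t$ once some $L'$ at $w$ has the \emph{same length} and $\text{sort}(f(P_{L'}))\preceq\text{sort}(f(P_L))$ --- never destroys a non-dominated $s$-$t$-vector.

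The key new ingredient is a monotonicity property of concatenation, valid because $\mathcal{C}=\{1,\dots,K\}$ is a chain: if $x,y\in\mathcal{C}^k$ with $\text{sort}(x)\preceqq\text{sort}(y)$ and $z$ is an arbitrary ordinal vector, then $\text{sort}((x,z))\preceqq\text{sort}((y,z))$, and moreover $\text{sort}(x)\neq\text{sort}(y)$ implies $\text{sort}((x,z))\neq\text{sort}((y,z))$. I would prove this by the standard counting argument that, in a totally ordered set, the $j$-th smallest element of the multiset underlying $(x,z)$ is $\preceq$ the $j$-th smallest element of the multiset underlying $(y,z)$ for every $j$, while equality of the two sorted concatenations would force equality of the two multisets, hence of $\text{sort}(x)$ and $\text{sort}(y)$ after cancelling $z$. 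This lemma is precisely what legitimises comparing only equal-length labels: a prefix dominated among equal-length prefixes stays dominated after appending any common suffix, so keeping only the $\preceq$-better one loses nothing downstream.

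With this in hand I would fix a non-dominated $P_{st}$ and show, by induction on the number of arcs of the prefix $P_{sv}$ (for $v$ running over the vertices of $P_{st}$ that precede $t$), that at termination $\mathcal{L}(v)$ contains a label $L_v$ with $\text{len}(f(P_{L_v}))=\text{len}(f(P_{sv}))$ and $\text{sort}(f(P_{L_v}))\preceqq\text{sort}(f(P_{sv}))$; the base case is the initialization label $((\cdot),[s])$ at $s$. For the step, the label $L_u$ at the predecessor $u$ of $v$ lies in $\mathcal{L}(u)$ with $u\neq t$, hence entered \texttt{Temp} and is eventually processed, producing along $(u,v)$ a candidate $L'$ at $v$ whose sorted vector is $\preceqq\text{sort}(f(P_{sv}))$ by the monotonicity lemma. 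The subtlety --- which I expect to be the main obstacle --- is that the particular label witnessing this bound may later be deleted by Modification~1. To get around it I would not track a single label but pick a $\preceqq$-minimal element $u_0$ among the finitely many (Theorem~\ref{thm:sizelist}) sorted vectors of length $\text{len}(f(P_{sv}))$ that ever appear at $v$ and satisfy $u_0\preceqq\text{sort}(f(P_{sv}))$ --- nonempty because $L'$ witnesses it --- and argue that a label with sorted vector $u_0$ can be neither rejected at creation nor deleted afterwards: either event would exhibit a strictly $\preceq$-smaller equal-length vector at $v$ still $\preceqq\text{sort}(f(P_{sv}))$, contradicting the minimality of $u_0$.

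Finally I would close the induction at $t$: taking $u$ to be the predecessor of $t$ on $P_{st}$, the retained label $L_u$ extends along $(u,t)$ to an $s$-$t$-vector $\preceqq\text{sort}(f(P_{st}))$; if $\text{sort}(f(P_{L_u}))\neq\text{sort}(f(P_{su}))$, the distinctness clause of the lemma would make this extended vector ordinally dominate $P_{st}$, contradicting non-domination. Hence $\text{sort}(f(P_{L_u}))=\text{sort}(f(P_{su}))$, its extension along $(u,t)$ is a label with sorted vector exactly $\text{sort}(f(P_{st}))$, and, being non-dominated, it survives the (unchanged) dominance test at $t$ and remains in $\mathcal{L}(t)$. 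As in Theorem~\ref{thm:correctness}, one should flag the boundary case where an interior vertex of $P_{st}$ equals $t$; this is excluded by $t$ being a genuine sink, so the backward argument never tries to propagate a label out of $t$.
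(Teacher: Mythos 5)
Your proof is correct, and its core idea coincides with the paper's: dominance between two equal-length labels at the same node is preserved when both are extended by a common suffix, so the pruning of Modification~1 cannot kill both of two labels whose extensions would be non-dominated; part (b) of Theorem~\ref{thm:correctness} is untouched since the handling at \(t\) is unchanged. Where you go beyond the paper is in how completely you execute this: the published proof only states the exchange step (two labels \(L_1,L_2\) at \(v\) with equal length and \(\text{sort}(f(P_{L_1}))\preceq\text{sort}(f(P_{L_2}))\) cannot both extend to non-dominated \(s\)-\(t\)-vectors) and then defers with ``the rest remains as in the proof of Theorem~\ref{thm:correctness}'', whereas you (i) actually prove the concatenation-monotonicity lemma by the order-statistics/multiset-cancellation argument on the chain \(\mathcal{C}\), and (ii) carry out the backward induction along a non-dominated path \(P_{st}\), including the genuinely delicate point the paper leaves implicit: the label witnessing the invariant at an intermediate node may itself be pruned later, which you resolve by tracking a \(\preceq\)-minimal equal-length sorted vector among those ever appearing at \(v\) that are \(\precqq\text{sort}(f(P_{sv}))\) (finitely many by Theorem~\ref{thm:sizelist}) and showing such a label can be neither rejected nor deleted without contradicting minimality. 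So the route is the same, but your version supplies the full completeness induction and the survival argument that the paper's one-line deferral glosses over; nothing in your argument fails, and the closing step at \(t\) (equality of sorted prefixes forced by non-dominance, then survival under the unchanged dominance check at \(t\)) is sound.
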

\begin{proof}
	It only remains to show that by replacing \texttt{lines 8 and 9} in Algorithm \ref{alg:ordinal} with the above mentioned modification we do not exclude any ordinally non-dominated paths vector from \(s\) to \(t\).
	Therefore, let \(v \in V\) be an arbitrary node in \(G\) and assume there exist two labels \(L_1\) and \(L_2\) with \(L_1, L_2 \in \mathcal{L}(v)\) satisfying \(\text{len}(f(P_{L_1})) = \text{len}(f(P_{L_2}))\) and \(\text{sort}(f(P_{L_1})) \preceq \text{sort}(f(P_{L_2}))\). Now, let \(P_{vt}\) be an arbitrary \(v\)-\(t\)-path and suppose both \(L_1\) and \(L_2\) extended by \(P_{vt}\) lead to ordinally non-dominated paths vectors from \(s\) to \(t\). If we extend \(L_1\) and \(L_2\) along the path \(P_{vt}\), we get \(\text{len}(f(P_{L_1}\cup P_{vt})) = \text{ len}(f(P_{L_2}\cup P_{vt}))\) and thus, \(\text{sort}(f(P_{L_1}\cup P_{vt})) \preceq \text{sort}(f(P_{L_2}\cup P_{vt}))\), which is a contradiction. Consequently, we do not exclude any ordinally non-dominated path vector from \(s\) to \(t\). The rest remains as in the proof of Theorem \ref{thm:correctness}.
	
\end{proof}

\begin{remark}\mbox{}
	
	Note that with Modification 1, we replace a constant time operation with a non-constant time operation. However, in practical applications, we reduce the amount of work by shrinking the number of labels in \(\mathcal{L}(v)\) for all \(v \in V\backslash\{t\}\) throughout the algorithm.
\end{remark}

%
%
%
%
%
%
%

\subsection{Modification 2}
Searching and removing a certain label \(L'\) in a set of labels \( \mathcal{L}(v) \) is difficult as long as the labels in \( \mathcal{L}(v) \) are not sorted. In the absence of a total order, labels cannot be stored sorted with respect to \(\precqq\). Nevertheless, total orders such as the lexicographic order exists on the domain of the label sets such that adding, searching, and removing a label can be implemented in time \(\mathcal{O}(K \log n)\). 

\begin{corollary}[Running Time]\mbox{}
	
	Algorithm \ref{alg:ordinal} with Modification 1 and 2 has a worst-case running time complexity of \(\mathcal{O}(m\cdot K\log^2n \cdot n^{2K+2})\).
\end{corollary}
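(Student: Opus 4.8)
The plan is to redo the counting from the (Running time) theorem for Algorithm~\ref{alg:ordinal}, substituting the cheaper label-set operations of Modification~2 and checking that Modification~1 contributes nothing larger. First I would observe that neither modification changes the number of iterations of the while-loop: by Theorem~\ref{thm:sizelist} every label set \(\mathcal{L}(v)\) contains \(\mathcal{O}(n^K)\) labels, a bound Modification~1 preserves since it only ever deletes labels, so over the \(n\) nodes at most \(\mathcal{O}(n^{K+1})\) labels are ever created and pushed onto \texttt{Temp}; bounding each out-degree crudely by \(m\), the algorithm therefore performs \(\mathcal{O}(m\cdot n^{K+1})\) arc extensions \(L'=L\oplus f(v,w)\) in total.

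Next I would bound the work attached to one arc extension. If each label \(L\) carries its absolute-frequency vector \(S(P_L)\) and a predecessor pointer, forming \(L'\) is \(\mathcal{O}(K)\). By Modification~2 the set \(\mathcal{L}(w)\) is held in a balanced search tree keyed by the total order \(\geq_{\text{lex}}\) on frequency vectors; since \(|\mathcal{L}(w)|=\mathcal{O}(n^K)\) the tree has height \(\mathcal{O}(\log n^K)=\mathcal{O}(K\log n)\), so the membership test of \texttt{line~6}, as well as each insertion and deletion, cost \(\mathcal{O}(K\log n)\) instead of the \(\mathcal{O}(n^K)\) linear scan of Algorithm~\ref{alg:ordinal}. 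Hence the label-set operations over all extensions amount to \(\mathcal{O}(m\cdot K\log n\cdot n^{K+1})\). For \(w\neq t\), Modification~1 additionally sweeps the equal-length labels of \(\mathcal{L}(w)\), removing those dominated by \(L'\) (or discarding \(L'\)); each pairwise test needs the two lengths in \(\mathcal{O}(n)\) and a comparison of the two sorted path vectors — reconstructed from the frequency vectors, as in the running-time proof done with heapsort in \(\mathcal{O}(n\log n)\) — plus \(\mathcal{O}(K\log n)\) per tree deletion, so one such sweep is \(\mathcal{O}(n^{K+1}\log n)\). For \(w=t\), the dominance sweep of \texttt{lines~10--18} is exactly the one analysed for Algorithm~\ref{alg:ordinal}, again a traversal of \(\mathcal{O}(n^K)\) labels with \(\mathcal{O}(n\log n)\) per dominance check and \(\mathcal{O}(K\log n)\) per deletion, i.e. \(\mathcal{O}(n^{K+1}\log n)\) per sweep.

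Finally I would combine these pieces exactly as in the (Running time) theorem: the \(\mathcal{O}(m\cdot n^{K+1})\) extensions contribute \(\mathcal{O}(m\cdot K\log n\cdot n^{K+1})\) in label-set operations, and multiplying this by the \(\mathcal{O}(n^{K+1}\log n)\) cost of a dominance sweep at \(t\) yields the worst-case bound \(\mathcal{O}(m\cdot K\log^2 n\cdot n^{2K+2})\); the Modification~1 sweeps at internal nodes add only \(\mathcal{O}(m\log n\cdot n^{2K+2})\), which lies within this bound, and correctness is untouched because Modification~2 merely re-represents the label sets already handled in Theorem~\ref{thm:correctness} and its Modification~1 refinement. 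The step I expect to be the main obstacle is making the \(\mathcal{O}(K\log n)\) per-operation claim of Modification~2 fully rigorous: since \(\precqq\) is not a total order the tree can only be keyed by the auxiliary order \(\geq_{\text{lex}}\), so one must check that membership, the dominated-label removals of Modification~1, and the removals in \texttt{lines~10--18} are all still implementable with only logarithmic overhead (respectively \(\mathcal{O}(n^{K+1}\log n)\) per sweep), in particular keeping the reconstruction and comparison of sorted path vectors of length up to \(n-1\) within the stated bounds and handling deletions performed during a tree traversal correctly.
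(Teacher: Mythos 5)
Your derivation is correct and is essentially the argument the paper intends: the corollary is stated without an explicit proof, and the natural justification is exactly your re-running of the running-time theorem's accounting with the membership/insert/delete cost replaced by the \(\mathcal{O}(K\log n)\) operations of Modification 2, combined (multiplicatively, just as in the paper's own theorem) with the \(\mathcal{O}(\log n\cdot n^{K+1})\) dominance sweep, which reproduces \(\mathcal{O}(m\cdot K\log^2 n\cdot n^{2K+2})\) and absorbs the Modification 1 sweeps at internal nodes. Your closing caveat about keying the search structure by \(\geq_{\text{lex}}\) rather than by \(\precqq\) is precisely the point the paper itself makes in Modification 2, so no further gap remains.
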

\begin{remark}\mbox{}
	
	Note, that \(\mathcal{O}(K)\) can be assumed to be in \(\mathcal{O}(m)\) as the number of ordinal levels might be relatively small in practical applications, i.e., \(|\mathcal{C}| \ll |E|\).
\end{remark}

\section{Computational Results}\label{sec:computationalResults}
In this section, we investigate the performance of the proposed labeling algorithm with respect to both above mentioned practical improvements on randomly created acyclic graphs and on acyclic grid graphs. Note that Modification 2 uses both practical improvements.
We do not investigate the proposed labeling algorithm without any modification as it is a pure enumerative algorithm with inappropriate runtimes even for small graph instances.

All algorithms have been tested on a compute server equipped with two Intel Xeon E-5-2670 @ 2.6-3.3GHz and 96GB RAM, running on Ubuntu 12.04.5 LTS Server, Kernel 3.2.0 x86\_64.
For the computations, Python 2.7.14 and NetworkX 2.1 were used.

\subsection{Random Acyclic Graphs}
Random acyclic graph instances are created by specifying the number of nodes (25, 50, 100, 200), the number of ordinal levels (2, 3, 5, 8, 10) which are randomly assigned to each arc and the probability \(p\) that two nodes in the graph are connected (\(p \in \{0.2, 0.4, 0.6\}\)). For each instance with a fixed number of nodes we execute 150 instances.
However, for each distinct triple of "Number of Nodes", "Number of ordinal levels" and "\(p\)" we conduct 10 instances and show the minimal, mean and maximal runtimes of both modifications. Note that runtimes increase quite fast for randomly created acyclic graphs such that we restricted ourselves to 10 instances per triple.
Table 1 reveals that for an increasing number of nodes, an increasing number of ordinal levels as well for an increasing p the runtimes of both modifications increase. 
In Figure 6, one can see that for an arc creation probability of \(0.6\), Modification 2 outperforms Modification 1 as the number of nodes and the number of ordinal levels increase.
For \(K=2\), the runtimes of both modifications nearly coincide, see Table 1, such that we do not display both of them.
Note that we get similar results for \(p=0.4\) and \(p=0.2\).


\begin{center}
\scriptsize
	\label{tab:runtimeACGraphMod1}
	\pgfplotstabletypeset[column type=,
	begin table={\begin{tabularx}{\textwidth}{p{0.5cm}| p{0.5cm}| p{0.5cm}|| X| X|| X| X|| X| X}},
		end table={\end{tabularx}},
	col sep=comma,
	string type,
	columns/x/.style={
		column name={},
		postproc cell content/.code={}
	},
	columns/y/.style={
		column name={},
		postproc cell content/.code={}
	},
	columns/z/.style={
		column name={},
		postproc cell content/.code={}
	},
	columns/w/.style={
		column name={},
		postproc cell content/.code={}
	},
	columns/v/.style={
		column name={},
		postproc cell content/.code={}
	},
	columns/u/.style={
		column name={},
		postproc cell content/.code={}
	},
	every nth row={3[+0]}{after row=\hline},
	every head row/.style={before row=\toprule, after row=\hline},
	every last row/.style={after row=\bottomrule},
	]{ACGmod12f.csv}
	\captionof*{table}{Table 1: Runtime of Modification 1 and 2 on Random Acyclic Graphs. The entry 0.1* indicates that the corresponding runtime of that instance is smaller than 0.1 sec.}
	\normalsize
\end{center}

\begin{figure}[h!]
	\centering
		\pgfplotstableread[col sep = comma]{plotDAG02.csv}\mydata
		\begin{tikzpicture}
		\begin{axis}[
		legend pos = north west,
		xlabel={Number of Nodes},
		ylabel={Mean Runtime [s]},
		xmin = 0,
		xmax = 200,
		ymin = 0,
		ymax = 28000
		]
		\addplot table[x = Number of Nodes, y = RT1]{\mydata};
		\addplot table[x = Number of Nodes, y = RT2]{\mydata};
		\addplot table[x = Number of Nodes, y = RT3]{\mydata};
		\scriptsize\legend{Runtime K=2,Runtime Mod.1 K=10, Runtime Mod.2 K=10}
		\end{axis}
		\end{tikzpicture}
	\captionof*{table}{Figure 6: Mean Runtime Modification 1 and 2 on Random Acyclic Graphs for \(p=0.6\) and \(K \in \{2,10\}\)}
\end{figure}

\subsection{Acylic Grid Graphs}
The nodes in the acyclic grid graph instances are arranged in a rectangular grid with given height and width. Every node has at most two outgoing arcs (up and right) to ensure that the graph is acyclic. Only nodes on the upper boundary have less outgoing arcs (the target node has no outgoing arc). As for the randomly created acyclic graph instances, the ordinal levels are randomly assigned to each arc and the runtimes of both modifications increase with an increasing number of nodes and an increasing number of ordinal levels. However, we are able to run the modifications on much bigger graph instances as the number of ordinal paths vectors with same length can expected to be very high due to the construction of the graph instances. Figure 7 indicates that Modification 2 outperforms Modification 1 although the difference is not as significant as for randomly created acyclic graphs.
Again, we note that the runtimes of both modification for \(K=2\) nearly coincide, such that we do not display both of them.

%
\scriptsize
\begin{center}
\label{tab:runtimeGridGraphMod1}
\pgfplotstabletypeset[column type=,
begin table={\begin{tabularx}{\textwidth}{p{1cm}| p{1cm}|| X| X|| X| X|| X| X}},
	end table={\end{tabularx}},
col sep=comma,
string type,
columns/x/.style={
	column name={},
	postproc cell content/.code={}
},
columns/y/.style={
	column name={},
	postproc cell content/.code={}
},
columns/z/.style={
	column name={},
	postproc cell content/.code={}
},
columns/w/.style={
	column name={},
	postproc cell content/.code={}
},
columns/v/.style={
	column name={},
	postproc cell content/.code={}
},
every nth row={5[+0]}{after row=\hline},
every head row/.style={before row=\toprule, after row=\hline},
every last row/.style={after row=\bottomrule},
]{GridMod12.csv}
\captionof*{table}{Table 2: Runtime of Modification 1 and 2 on Acyclic Grid Graphs. The entry 0.1* indicates that the corresponding runtime of that instance is smaller than 0.1 sec.}
\end{center}
\normalsize

\begin{figure}[h!]
	\centering
		\pgfplotstableread[col sep = comma]{plotDAG02r.csv}\mydata
		\begin{tikzpicture}
		\begin{axis}[
		legend pos = north west,
		xlabel={Number of Nodes},
		ylabel={Mean Runtime [s]},
		xmin = 0,
		xmax = 5000,
		ymin = 0,
		ymax = 1000
		]
		\addplot table[x = Number of Nodes, y = RT1]{\mydata};
		\addplot table[x = Number of Nodes, y = RT2]{\mydata};
		\addplot table[x = Number of Nodes, y = RT3]{\mydata};
		\scriptsize\legend{Runtime K=2, Runtime Mod.1 K=10, Runtime Mod.2 K=10}
		\end{axis}
		\end{tikzpicture}
		\captionof*{table}{Figure 7: Mean Runtime Modification 1 and 2 on Acyclic Grid Graphs for  \(K \in \{2,10\}\)}
\end{figure}	

\subsection{Comparison}
Figure 8 displays that even for a low arc creation probability, i.e., \(p=0.2\), the runtimes on random acyclic graphs exceed the runtimes on acyclic grid graphs intensively, cf. Figure 8(a).
In Figure 8(b) this gets even clearer as the difference between the runtime of Modification 2 on random acyclic graphs (16585 seconds) and the runtime on acyclic grid graphs (0.4 seconds) for \(K=10\) is roughly equal to 16584 seconds.
\begin{figure}[h!]
	\subfloat[Mean Runtime Modification 1 and 2 for \(N=200\), \(p=0.2\) and  \(K \in \{2,3,5,8,10\}\)]{
	\pgfplotstableread[col sep = comma]{plotGGvsDAG.csv}\mydata
	\begin{tikzpicture}[scale=0.76]
	\begin{axis}[
	legend pos = north west,
	xlabel={Number of ordinal levels},
	ylabel={Mean Runtime [s]},
	xmin = 2,
	xmax = 10,
	ymin = 0,
	ymax = 180
	]
	\addplot table[x = Number of ordinal levels, y = RT1]{\mydata};
	\addplot table[x = Number of ordinal levels, y = RT2]{\mydata};
	\addplot table[x = Number of ordinal levels, y = RT3]{\mydata};
	\scriptsize\legend{Runtime Mod.1 RAG, Runtime Mod.2 RAG, Runtime AGG}
	\end{axis}
	\end{tikzpicture}
}
\hspace{0.7cm}
\subfloat[Mean Runtime Modification 1 and 2 for \(N=200\), \(p = 0.6\) and  \(K \in \{2,3,5,8,10\}\)]{
	\pgfplotstableread[col sep = comma]{plotGGvsDAG1.csv}\mydata
	\begin{tikzpicture}[scale=0.76]
	\begin{axis}[
	legend pos = north west,
	xlabel={Number of ordinal levels},
	ylabel={Mean Runtime [s]},
	xmin = 2,
	xmax = 10,
	ymin = 0,
	ymax = 28000
	]
	\addplot table[x = Number of ordinal levels, y = RT1]{\mydata};
	\addplot table[x = Number of ordinal levels, y = RT2]{\mydata};
	\addplot table[x = Number of ordinal levels, y = RT3]{\mydata};
	\scriptsize\legend{Runtime Mod.1 RAG, Runtime Mod.2 RAG, Runtime AGG}
	\end{axis}
	\end{tikzpicture}
}
	\captionof*{table}{Figure 8: Mean Runtime Modification 1 and 2 on Random Acyclic Graphs (ACG) and Acyclic Grid Graphs (AGG) for \(N=200\), \(p\in \{0.2,0.6\}\) and  \(K \in \{2,3,5,8,10\}\)}
\end{figure}	

Consequently, in a real-world application one would prefer to model a specific task using acyclic grid graphs instead of general acyclic graphs.

\section{Conclusion}\label{sec:conclusion}
In this paper, we have introduced both the concepts of ordinal efficiency and ordinal non-dominance on acyclic graphs with ordinally weighted arc costs. We showed that the proposed order relation defined on the set of \(s\)-\(t\)-paths defines a preorder, but not a partial order. Further, we proved that the number of ordinally efficient paths might be exponential in the number of nodes, while the number of ordinally non-dominated paths vectors from \(s\) to every other node in the graph is polynomially bounded. Thus, we proposed a polynomial time labeling algorithm to compute the set of ordinally non-dominated paths vectors from \(s\) to \(t\). We showed correctness and runtime of the algorithm and proposed two practical improvements to reduce the running time in practical applications. We have conducted a computational study on two different types of test instances to illustrate the difference in efficiency of the proposed modifications. As expected, Modification 2 performed best for large graph instances and an increasing amount of ordinal levels.

In the future, one could possibly investigate other standard combinatorial optimization problems with ordinal values, e.g., the Knapsack problem with ordinally valued items or minimal spanning trees with ordinally weighted arcs. Further, one could investigate a multiobjective ordinal path problem by assigning two (or more) ordinal levels to every arc in the graph.
\newline

\textbf{\textsf{Acknowledgements}} This work was partially supported by the Bundesministerium f\"ur Bildung und Forschung (BMBF) under Grant No. 13N14561, Deutsche Forschungsgemeinschaft (DFG, Project-ID RU 1524/2-3) and the 
DAAD-CRUP Luso-German bilateral cooperation under the 2017-2018 research project MONO‐EMC (Multi‐Objective Network Optimization for Engineering and Management Support).
Jos\'{e} Rui Figueira also acknowledges the support from the FCT grant SFRH/BSAB/139892/2018.

\bibliographystyle{alpha}
\bibliography{Bibdesk}
\end{document}